\newtheorem{proposition}{Proposition}
\newtheorem{defin}{\bf Definition}
\newenvironment{definition}{\begin{defin}\rm}{\end{defin}}
\newenvironment{proof}{\noindent{\it Proof.}}{$\diamond$}
\def\ga{\mbox{Ga}}
\def\be{\mbox{Be}}
\def\no{\mbox{N}}
\def\dir{\mbox{Dir}}
\def\E{\mbox{E}}
\def\V{\mbox{Var}}
\def\P{\mbox{P}}
\def\d{\mbox{d}}
\def\data{\mbox{data}}
\def\br{{\bf r}}
\def\bx{{\bf x}}
\def\by{{\bf y}}
\def\bI{{\bf I}}
\def\bN{{\bf N}}
\def\bU{{\bf U}}
\def\bX{{\bf X}}
\def\bY{{\bf Y}}
\def\bZ{{\bf Z}}
\newcommand{\balpha}{\boldsymbol{\alpha}}
\newcommand{\bepsilon}{\boldsymbol{\epsilon}}
\newcommand{\bgamma}{\boldsymbol{\gamma}}
\newcommand{\bmu}{\boldsymbol{\mu}}
\newcommand{\bpi}{\boldsymbol{\pi}}
\newcommand{\Ree}{{\rm I}\!{\rm R}}
\newcommand{\AC}{\mathcal{A}}
\newcommand{\BB}{\mathcal{B}}
\newcommand{\YY}{\mathcal{Y}}
\newcommand{\PT}{\mbox{PT}}
\newcommand{\PPT}{\mbox{PPT}}
\begin{document}

\baselineskip=24pt

\title{{\bf Projected P\'olya Tree}}
\author{
  {\sc  Luis Nieto-Barajas$^1$} and
  {\sc  Gabriel N\'u\~nez-Antonio$^2$} \\
  {\sl {\small $^1$Department of Statistics, ITAM, Mexico}} \\ 
  {\sl {\small $^2$Department of Mathematics, UAM-I, Mexico}}}
\date{}

\maketitle

\begin{abstract}
One way of defining probability distributions for circular variables (directions in two dimensions) is to radially project probability distributions, originally defined on $\mathbb{R}^2$, to the unit circle. Projected distributions have proved to be useful in the study of circular and directional data. Although any bivariate distribution can be used to produce a projected circular model, these distributions are typically parametric. In this article we consider a bivariate P\'olya tree on $\mathbb{R}^2$ and project it to the unit circle to define a new Bayesian nonparametric model for circular data. We study the properties of the proposed model, obtain its posterior characterisation and show its performance with simulated and real datasets. 
\end{abstract}

\noindent {\sl Keywords}: Bayesian nonparametrics, circular data, directional data, projected normal.

\section{Introduction}
\label{sec:intro}

Directional data arise from the observation of unit vectors in $k$-dimensional space and, consequently, they can be represented through $k-1$ angles. Thus, the sample space associated with this type of data is the $k$-dimensional unit sphere, $\mathbb{S}^k$. The most common case is when $k=2$ producing so called circular data. This type of data is especially common in biology, geophysics, meteorology, ecology and environmental sciences. Specific applications include the study of wind directions, orientation data in biology, direction of birds migration, directions of fissures propagation in concrete and other materials, orientation of geological deposits, and the analysis of mammalian activity patterns in ecological reserves, among others. 
For a survey on the area, the reader is referred to classic literature, e.g. \cite{mardia:72}, \cite{fisher:95}, \cite{mardia&jupp:00} and \cite{jamma&gupta:01}. For a more recent overview of applications of circular data analysis in ecological and environmental sciences see \cite{arnold&gupta:06} and \cite{lee:10}.

In recent years the development of statistical methods to analyse directional data has had a new interest. \cite{presnell&al:98} considered the case of projected linear models, \cite{delia&al:01} studied longitudinal circular data, and \cite{paine&al:18} introduced an elliptically symmetric angular Gaussian distribution for the study of directional data on $\mathbb{S}^k$. 

While there are several ways to define probability distributions for directional random vectors, one of the simplest ways to generate distributions on $\mathbb{S}^k$ is to radially project probability distributions originally defined on $\mathbb{R}^k$. A directional distribution which has received a lot of attention is the special case where the distribution to project is a $k$-variate Normal distribution; in this case it is said the corresponding directional variable has a \textit{projected Normal distribution} \citep[e.g.][]{mardia&jupp:00}. Within a Bayesian context, this model has been studied by \cite{nunez&gutierrez:05} and \cite{wang&gelfand:13} for the circular case, and \cite{hernandez&al:17} for the $k$-dimensional case.

Although any bivariate distribution can be used to produce a projected circular model, these distribution are typically parametric. However, in real situations it may be preferable to consider semiparametric or nonparametric models as an alternative to properly describe the behaviour of this kind of data. In a classical context, nonparametric modelling for circular data has been typically carried out using a circular kernel density such as the von Mises distribution \citep[e.g.][]{fisher:89,oliveira&al:14} or via nonnegative trigonometric sums \citep{fernandez&gregorio:16}. Within a Bayesian nonparametric approach, a unimodal and symmetric density \citep{brunner&lo:94}, Dirichlet processes mixtures (DPM) of  von Mises distributions \citep{ghosh&al:03}, DPM of projected normal distributions \citep{nunez&al:15} and more recently mixtures of basis of trigonometric polynomials \citep{binette&guillotte:18} have been proposed. Other semiparametric approaches are mixtures of triangular distributions \citep{mcvinish&mengersen:08} and log-spline distributions \citep{ferreira&al:08}. In this work we consider a bivariate P\'olya tree and project it to the unit circle to produce a projected P\'olya tree model. This new Bayesian nonparametric model for circular data will be shown to be competitive with respect to other Bayesian nonparametric proposals with the advantage of the simplicity in carrying out posterior inference. 

The rest of the paper is organized as follows. In Section \ref{sec:pt} we set the notation and present basic ideas about univariate P\'{o}lya trees. In Section \ref{sec:model} we introduce the projected P\'olya tree prior and study its properties. In Section \ref{sec:post} we describe how to perform posterior inference via a data augmentation technique. We illustrate the performance of our proposal in Section \ref{sec:numerical} via a simulation study and the analysis of a real data set. We conclude with some remarks in Section \ref{sec:concl}.

Before proceeding we introduce notation: $\ga(\alpha,\beta)$ denotes a gamma density with mean $\alpha/\beta$; $\no(\mu,\tau)$ denotes a normal density with mean $\mu$ and precision $\tau$; $\be(\alpha_1,\alpha_2)$ denotes a beta density with mean $\alpha_1/(\alpha_1+\alpha_2)$; $\no_2(\bmu,\Sigma)$ denotes a bivariate normal density with mean vector $\bmu$ and precision matrix $\Sigma$; and $\dir(\balpha)$ denotes a dirichlet density with parameter vector $\balpha$.

\section{P\'{o}lya Tree}
\label{sec:pt}

In this section we recall the definition of a univariate P\'olya tree and set notation. Consider $(\mathbb{R},\BB)$, the measurable space with $\mathbb{R}$ the real line and $\BB$ the Borel sigma algebra of subsets of $\mathbb{R}$. We require a binary partition tree, which using notation from \cite{nieto&mueller:12}, is denoted by  $\Pi=\{B_{mj}: m\in\mathbb{N}, j=1,\dots,2^m\}$, where the index $m$ specifies the level of the tree and $j$ the location of the partitioning subset within the level. In general, at level $m$, the set $B_{mj}$ splits into two disjoint sets $(B_{m+1,2j-1},B_{m+1,2j})$. For every set $B_{mj}$ there is associated a branching probability $Y_{mj}$ such that $Y_{m+1,2j-1} = F(B_{m+1,2j-1} \mid B_{mj})$, and $Y_{m+1,2j}=1-Y_{m+1,2j-1} = F(B_{m+1,2j} \mid B_{mj})$, where $F$ will be used to denote a cumulative distribution function (cdf) or a probability measure indistinctively.  

\begin{definition}
\label{def:uniPT}
\citep{lavine:92}. Let $\AC_m=\{\alpha_{mj},\, j=1,\ldots,2^m\}$ be non-negative real numbers, $m=1,2,\ldots,$ and let $\AC=\bigcup \AC_m$. A random probability measure $F$ on $(\mathbb{R},\BB)$ is said to have a P\'olya tree prior with parameters $(\Pi,\AC)$, if for $m=1,2,\ldots$ there exist random variables $\YY_m=\{Y_{m,2j-1}\}$ for $j=1,\ldots,2^{m-1}$, such that the following hold: 
\begin{enumerate}
\item All the random variables in $\YY=\cup_m\{\YY_m\}$ are independent. 
\item For every $m=1,2,\ldots$ and every $j=1,\ldots,2^{m-1}$, $Y_{m,2j-1}\sim\be(\alpha_{m,2j-1},\alpha_{m,2j})$.
\item For every $m=1,2,\ldots$ and every $j=1,\ldots,2^m$ 
$$F(B_{mj})=\prod_{k=1}^m Y_{m-k+1,j_{m-k+1}^{(m,j)}},$$
where $j_{k-1}^{(m,j)}=\lceil j_k^{(m,j)}/2 \rceil$ is a recursive decreasing formula, whose initial value is $j_{m}^{(m,j)}=j$, that locates the set $B_{mj}$ with its ancestors upwards in the tree. $\lceil\cdot\rceil$ denotes the ceiling function, and $Y_{m,2j}=1-Y_{m,2j-1}$ for $j=1,\ldots,2^{m-1}$.
\end{enumerate}
\end{definition}

A P\'olya tree prior can be centred around a parametric probability measure $F_0$. The simplest way \citep{hanson&johnson:02} consists of matching the partition with the dyadic quantiles \citep{ferguson:74} of the desired centring measure and keeping $\alpha_{mj}$ constant within each level $m$. More explicitly, at each level $m$ we take  
\begin{equation}
\label{bmj}
B_{mj}=\left(F_0^{-1}\left(\frac{j-1}{2^m}\right),F_0^{-1}\left(\frac{j}{2^m}\right)\right],
\end{equation}
for $j=1,\ldots,2^m$, with $F_0^{-1}(0)=-\infty$ and $F_0^{-1}(1)=\infty$. Note that for $j=2^m$, $B_{mj}$ is defined open on both sides. If we further take $\alpha_{mj}=\alpha_m$ for $j=1,\ldots,2^m$ we get $\E\{F(B_{mj})\}=F_0(B_{mj})$.

In particular, we take $\alpha_{mj}=\alpha\rho(m)$, so that the parameter $\alpha$ can be interpreted as a precision parameter of the P\'olya tree \citep{walker&mallick:97}, and the function $\rho$ controls the speed at which the variance of the branching probabilities moves down in the tree. As suggested by \cite{watson&al:17} we take $\rho(m)=m^\delta$ with $\delta>1$ to ensure the process $F$ is absolutely continuous  \citep{kraft:64}.

\section{Main model}
\label{sec:model}

\subsection{Bivariate P\'olya tree}

In this section we generalize the univariate P\'olya tree to a bivariate one. Let $(\mathbb{R}^2,\BB^2)$ be our measurable space. There are several ways of defining and denoting the nested partition $\Pi$ \citep{paddock:02,hanson:06,jara&al:09,filippi&holmes:17}. For simplicity, we define the partition as the cross product of univariate partitions and use the notation of \cite{nieto&mueller:12} presented in Section \ref{sec:intro}. In other words, $\Pi=\{B_{m,j,k}\}$ such that $B_{m,j,k}=B_{m,j}\times B_{m,k}$, for $j,k=1,\ldots,2^{m}$ and $m=1,2,\ldots$. The index $m$ denotes the level of the tree and the pair $(j,k)$ locates the partitioning subset within the level. In general, the set $B_{m,j,k}$ splits into four disjoint subsets $(B_{m+1,2j-1,2k-1},B_{m+1,2j-1,2k},B_{m+1,2j,2k-1},B_{m+1,2j,2k})$. At each level $m$ we will have a partition of size $4^m$. We associate random branching probabilities $Y_{m,j,k}$ with every set $B_{m,j,k}$ such that, for example, $Y_{m+1,2j-1,2k-1}=F(B_{m+1,2j-1,2k-1}\mid B_{m,j,k})$, where again $F$ denotes a cdf or a probability measure, indistinctively.

\begin{definition}
\label{def:biPT}
Let $\AC=\{\alpha_{m,j,k}\}$, $j,k=1,\ldots,2^m$, $m=1,2,\ldots$ be a set of nonnegative real numbers. A random probability measure $F$ on $(\mathbb{R}^2,\BB^2)$ is said to have a bivariate P\'olya tree prior with parameters $(\Pi,\AC)$ if there exists random vectors $\bY_{m,j,k}=(Y_{m+1,2j-1,2k-1},Y_{m+1,2j-1,2k},Y_{m+1,2j,2k-1},Y_{m+1,2j,2k})$ such that the following hold: 
\begin{enumerate}
\item All random vectors $\bY_{m,j,k}$, $j,k=1,\ldots,2^m$ and $m=0,1,2,\ldots$ are independent
\item For every $m=0,1,\ldots$ and every $j,k=1,\ldots,2^m$, $\bY_{m,j,k}\sim\dir(\balpha_{m,j,k})$, where $\balpha_{m,j,k}=(\alpha_{m+1,2j-1,2k-1},\alpha_{m+1,2j-1,2k},\alpha_{m+1,2j,2k-1},\alpha_{m+1,2j,2k})$
\item For every $m=1,2,\ldots$ and every $j,k=1,\ldots,2^m$, $$F(B_{m,j,k})=\prod_{l=1}^m Y_{m-l+1,j_{m-l+1}^{m,j,k},k_{m-l+1}^{m,j,k}},$$ where $j_{l-1}^{(m,j,k)}=\left\lceil\frac{j_l^{(m,j,k)}}{2}\right\rceil$ and $k_{l-1}^{(m,j,k)}=\left\lceil\frac{k_l^{(m,j,k)}}{2}\right\rceil$ are recursive decreasing formulae, whose initial values are $j_m^{(m,j,k)}=j$ and $k_m^{(m,j,k)}=k$, that locate the set $B_{m,j,k}$ with its ancestors upwards in the tree. 
\end{enumerate}
\end{definition}
Note that in the previous definition $\bY_{0,1,1}=(Y_{1,1,1},Y_{1,1,2},Y_{1,2,1},Y_{1,2,2})$ and $\balpha_{0,1,1}=(\alpha_{1,1,1},$ $\alpha_{1,1,2},\alpha_{1,2,1},\alpha_{1,2,2})$ are the vectors associated to the partition elements at level $m=1$.

It is desired to center the bivariate P\'olya tree around a parametric probability measure $F_0$. For simplicity, let us assume that $F_0(x_1,x_2)=F_{1_0}(x_1)F_{2_0}(x_2)$. Non-independence $F_0$ could also be considered but a suitable transformation of the partition sets $B_{m,j,k}$ would be required \citep[e.g.][]{jara&al:09}. Therefore, we proceed as in the univariate case by matching the partition $B_{m,j,k}=B_{m,j}\times B_{m,k}$ with the dyadic quantiles of the marginals $F_{1_0}$ and $F_{2_0}$, i.e.,
\begin{equation}
\label{bmj2}
B_{m,j}=\left(F_{1_0}^{-1}\left(\frac{j-1}{2^m}\right),F_{1_0}^{-1}\left(\frac{j}{2^m}\right)\right]\quad\mbox{and}\quad B_{m,k}=\left(F_{2_0}^{-1}\left(\frac{k-1}{2^m}\right),F_{2_0}^{-1}\left(\frac{k}{2^m}\right)\right],
\end{equation}
for $j,k=1,\ldots,2^m$. We further define $\balpha_{m,j,k}=(\alpha\rho(m+1),\ldots,\alpha\rho(m+1))$ where $\alpha>0$ is the precision parameter and $\rho(m)=m^\delta$ with $\delta>1$ to define an absolutely continuous bivariate P\'olya tree. It is not difficult to prove that a bivariate P\'olya tree, defined in this way, satisfies $\E\{F(B_{m,j,k})\}=F_0(B_{m,j,k})=1/4^m$. 

In practice we need to stop partitioning the space at a finite level $M$ to define a finite tree process. At the lowest level $M$, we can spread the probability within each set $B_{M,j,k}$ according to $f_0$, the density associated to $F_0$. In this case the random probability measure defined will have a bivariate density of the form
\begin{equation}
\label{eq:dPT}
f(\bx)=\left\{\prod_{m=1}^M Y_{m,j_m^{(x_1)},k_m^{(x_2)}}\right\}4^M f_0(\bx),
\end{equation}
where $\bx'=(x_1,x_2)\in\mathbb{R}^2$, and with $(j_m^{(x_1)},k_m^{(x_2)})$ identifying the set at level $m$ that contains $\bx$. This maintains the condition $\E(f)=f_0$. We denote a finite bivariate P\'olya tree process as $\PT_M(\alpha,\rho,F_0)$. By taking $M\to\infty$ we recover the (infinite) bivariate P\'olya tree of Definition \ref{def:biPT}.

It is well known \citep[e.g.][]{lavine:92} that univariate and multivariate P\'olya tree densities are discontinuous at the boundaries of the partitions. To overcome this feature, an extra mixture with respect to the parameters of the centering measure is imposed, that is, $f_0$ is replaced by $f_0(\cdot\mid\eta)$ and a prior $f(\eta)$ is placed to induce smoothness.

\subsection{Projected tree}

We are now in a position to construct the projected P\'olya tree. Let us assume a bivariate random vector $ \bX'=(X_1,X_2)$ such that $\bX\mid f \sim f$ and $f$ is given in \eqref{eq:dPT}. We project the random vector $\bX$ to the unit circle by defining $\bU=\bX/||\bX||$. Alternatively, we can work with the polar coordinates transformation $(X_1,X_2)\rightarrow (\Theta,R)$, where $\Theta$ is the angle and $R=||\bX||$ is the resultant length of the vector in the plane. The inverse transformation becomes $X_1=R\cos\Theta$ and $X_2=R\sin\Theta$. Thus, the corresponding Jacobian is $J=R$. Then, the induced marginal density for the angle $\Theta$ has the form
\begin{equation}
\label{eq:PPT}
f(\theta)=\int_0^\infty\left\{\prod_{m=1}^M Y_{m,j_m^{(r\cos\theta)},k_m^{(r\sin\theta)}}\right\}4^M f_0(r\cos\theta,r\sin\theta)\,|J|\,\d r.
\end{equation}
We will refer to the density $f(\theta)$, given in \eqref{eq:PPT}, as the \textit{projected P\'olya tree} and will be denoted by $\PPT_M(\alpha,\rho,f_0)$. 

In contrast to P\'olya tree densities, the projected P\'olya tree \eqref{eq:PPT} is not discontinuous at the boundaries of the partitions. The reason for the smoothing effect relies on the marginalisation when passing from the joint density $f(\theta,r)$ to the marginal $f(\theta)$, which can also be seen as a mixture of the form $f(\theta)=\int f(\theta\mid r)f(r)\d r$. A specific angle $\theta_0$ might come from many points in $\Ree^2$ defined by different resultants in polar coordinates say $(\theta_0,r_l)$, $l=1,2,\ldots$. Each of these points might belong to different partition sets, which are added (integrated) in the marginalisation. Therefore, no extra mixing is required to produce smooth densities. 

In particular, we can center our projected P\'olya tree on the projected normal distribution, considered by \cite{nunez&gutierrez:05}, by taking $f_0(\bx)=\no_2(\bx\mid\bmu,\bI)$, that is, a bivariate normal density with mean vector $\bmu'=(\mu_1,\mu_2)$ and precision the identity matrix $\bI$. In this case, the projected P\'olya tree becomes 
\begin{align}
\nonumber
f(\theta)=\int_0^\infty & \left\{\prod_{m=1}^M Y_{m,j_m^{(r\cos\theta)},k_m^{(r\sin\theta)}}\right\}4^M (2\pi)^{-1}e^{-\frac{1}{2}\bmu'\bmu}\ r \\
\label{eq:PPTn}
&\times\exp\left[-\frac{1}{2}\left\{r^2-2r\left(\mu_1\cos\theta+\mu_2\sin\theta\right)\right\}\right]I_{(0,2\pi]}(\theta)\d r.
\end{align}

In general, the marginal density $f(\theta)$ does not have an analytic expression. However, it can be computed numerically via quadrature. Say, if $0=r^{(0)}<r^{(1)}<\cdots<r^{(L)}<\infty$ is a partition of the positive real line, then 
$$f(\theta)\approx\sum_{l=1}^L f(r^{(l)}\cos\theta,r^{(l)}\sin\theta)\,|J|\left(r^{(l)}-r^{(l-1)}\right),$$
where $f(\cdot,\cdot)$ is given in \eqref{eq:dPT}. Alternatively, $f(\theta)$ can also be approximated via Monte Carlo.

Densities for circular variables are periodic, that is $f(\theta+2\pi)=f(\theta)$, therefore moments are not defined in the usual way \citep[e.g.][]{rao&umbach:10}. Instead, the $p^{th}$ trigonometric moment of a random variable $\Theta$ is a complex number of the form $\varphi_p=\E(e^{ip\Theta})=a_p+ib_p$, for any integer $p$, where $a_p=\E(\cos p\Theta)$ and $b_p=\E(\sin p\Theta)$. The mean direction of $\Theta$, $\nu_\theta$, and a concentration measure around the mean, $\varrho_\theta$, are defined as 
\begin{equation}
\label{eq:moments}
\nu_\theta=\arctan(b_1/a_1),\quad\varrho_\theta=\sqrt{a_1^2+b_1^2}, 
\end{equation}
where $\varrho_\theta\in[0,1]$. A value of $\varrho_\theta$ close to one means that $\Theta$ is highly concentrated around its mean $\nu_\theta$, whereas a value of $\varrho_\theta$ close to zero means that $\Theta$ is highly disperse. 

To illustrate what the paths of the projected P\'olya tree look like, we consider the model centred around the projected normal, as in \eqref{eq:PPTn}, with four levels of the partition ($M=4$), a precision parameter $\alpha=1$, a function $\rho(m)=m^{\delta}$ with $\delta=1.1$, and different values of $\bmu$. For each setting we sampled ten paths (densities) from the model. The marginal density of $\theta$ is approximated numerically with $L=100$ points.

Figure \ref{fig:priorppt1} contains four panels which correspond to $\bmu'=(0,1)$ (top left), $\bmu'=(1,0)$ (top right), $\bmu'=(0,-1)$ (bottom left) and $\bmu'=(-1,0)$ (bottom right). These values of $\bmu$ represent specific values of the bivariate normal mean in locations around the unit circle at $\pi/2$, $2\pi$, $3\pi/2$, and $\pi$, respectively. We first note that the densities are connected in the sense that the value at $\theta=0$ coincides with the value at $\theta=2\pi$, as they should be. Within each panel we see the diversity of the paths, most of them present a multimodal behaviour. However, the predominant modes are located around the directions of the $\bmu$'s for each of the graphs in the four panels. 

In a different scenario, we move the bivariate normal mean away from the origin to see the impact in the projected tree. This is presented in Figure \ref{fig:priorppt2} that contains four panels which correspond to $\bmu'=(0,0)$ (top left), $\bmu'=(1,1)$ (top right), $\bmu'=(2,2)$ (bottom left) and $\bmu'=(5,5)$ (bottom right). The first panel corresponds to the projected tree centred around the uniform density, obtained when $\bmu'=(0,0)$, however the simulated paths show a high variability around the centring density. As we move away from the origin (second to fourth panels) two things happen, there starts to appear a dominant mode around $\pi/4$, and the variability of the paths highly decreases. This is an interesting finding because in P\'olya trees the variability is entirely controlled by the parameters $\alpha$ and $\rho(\cdot)$ \citep[e.g.][]{hanson:06} and not by the centring measure. What we are seeing in this Figure \ref{fig:priorppt2} is that the variability of the paths in this projected P\'olya tree is also controlled by the centring measure and specifically by its location vector's norm. In other words, the location parameter of the centring measure not only controls the shape of the densities but also the variability of the paths.  

For the eight values of $\bmu$ studied, in Figure \ref{fig:priormoments} we also show the prior distribution of the mean $\nu_\theta$ and the prior distribution of the concentration $\varrho_\theta$, given in \eqref{eq:moments}. We based our prior distributions on 500 simulated paths of the corresponding projected P\'olya tree. For varying $\bmu$ around the unit circle, we see that $\nu_\theta$ (top left panel) has a symmetric distribution with low variability and locations that move at $\pi/2$, $2\pi$, $3\pi/2$ (or $-\pi/2$), and $\pi$, respectively. However, the concentration parameter $\varrho_\theta$ (top right panel) has practically the same symmetric distribution for the different values of $\bmu$ around the unit circle. On the other hand (in the bottom left panel), the distribution of $\nu_\theta$ for $\bmu'=(0,0)$ is uniform, whereas this distribution is increasingly concentrated around $\pi/4$, when $\bmu$ moves from $(2,2)$ to $(5,5)$. Finally, the concentration parameter $\varrho_\theta$ (bottom right panel) has a dispersed distribution around 0.4, for $\bmu'=(0,0)$, and moves to distributions less dispersed and locations that increase its values closer to one, when $\bmu$ moves from $(2,2)$ to $(5,5)$. 

A typical concern in Bayesian nonparametric priors is posterior consistency of the model. That is, we want to be sure that the posterior distribution concentrates around (weak) neighbours of a particular density, say $f^*(\theta)$, when the sample size $n$ goes to infinity. \cite{barron:98} proved that this property if satisfied as long as the prior $f$ puts positive mass around a Kullback-Leibler neighbour of $f^*$. That is, we want $\P\{\mbox{KL}(f^*,f)<\epsilon\}>0$ where $\mbox{KL}(f^*,f)=\int\log\left\{f^*(x)/f(x)\right\}f^*(x)\d x$. The following result states conditions for this to happen. 

\begin{proposition}
\label{prop:consistency}
Let $f\sim\PPT(\alpha,\rho,f_0)$ as in \eqref{eq:PPT} with $M\to\infty$. Let $f^*(\theta)$ be an arbitrary density such that $\mbox{KL}(f^*,f_0)<\infty$. Then, if $\sum_{m=1}^\infty\rho(m)^{-1/2}<\infty$, as $n\to\infty$ $f$ achieves weak posterior consistency. 
\end{proposition}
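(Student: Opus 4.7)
The plan is to reduce weak posterior consistency of the projected Pólya tree to the standard Kullback–Leibler (KL) support property of the underlying bivariate Pólya tree, and then to invoke the classical Schwartz/Barron theorem. Recall that by \cite{barron:98}, weak posterior consistency at a density $f^*$ on the circle holds provided that, for every $\epsilon>0$, the prior $\Pi$ assigns positive probability to the KL neighbourhood $\{f : \KL(f^*,f)<\epsilon\}$. So the goal is to establish this KL support condition for the projected Pólya tree.

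The key bridge between the circular and the bivariate settings is the data–processing (contraction) inequality for KL divergence. Working in polar coordinates, write $\tilde f_0(r,\theta)=f_0(r\cos\theta,r\sin\theta)\,r$ for the centring bivariate density, and let $\tilde f_0(r\mid\theta)$ be its conditional for $R$ given $\Theta$. First I would \emph{lift} $f^*$ to a bivariate density by
\[
\tilde f^*(r,\theta)=\tilde f_0(r\mid\theta)\,f^*(\theta).
\]
A direct computation, using that the marginal of $\tilde f_0$ in $\theta$ is precisely the angular density $f_0^\Theta$ appearing in the hypothesis, gives
\[
\KL(\tilde f^*,\tilde f_0)=\int f^*(\theta)\log\frac{f^*(\theta)}{f_0^\Theta(\theta)}\,d\theta=\KL(f^*,f_0)<\infty,
\]
so the lift lies in the KL neighbourhood system of $\tilde f_0$.

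Next I would invoke the KL support property for the bivariate Pólya tree centred at $\tilde f_0$: under the summability condition $\sum_m \rho(m)^{-1/2}<\infty$, an adaptation of the univariate argument of \cite{walker&mallick:97} (whose dyadic construction carries over level–by–level to the quadrary partition used in Definition~\ref{def:biPT}, since each $\bY_{m,j,k}$ is a Dirichlet vector with the same concentration $\alpha\rho(m+1)$) yields
\[
\P\bigl\{\KL(\tilde f^*,\tilde f)<\epsilon\bigr\}>0\qquad\text{for every }\epsilon>0,
\]
where $\tilde f\sim\PT(\alpha,\rho,\tilde f_0)$ on $\mathbb{R}^2$. Finally, letting $f$ denote the induced angular marginal of $\tilde f$, the data–processing inequality for KL (applied to the polar projection $(R,\Theta)\mapsto\Theta$) gives $\KL(f^*,f)\leq\KL(\tilde f^*,\tilde f)$, hence
\[
\P\bigl\{\KL(f^*,f)<\epsilon\bigr\}\;\geq\;\P\bigl\{\KL(\tilde f^*,\tilde f)<\epsilon\bigr\}\;>\;0.
\]
Plugging this into Schwartz/Barron yields weak posterior consistency.

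The main obstacle is the second step: establishing the bivariate KL support result under the summability condition. The univariate proof in \cite{walker&mallick:97} couples a truncated Pólya tree at level $M$ with its continuous extension and controls the tail contribution through $\sum_{m>M}\rho(m)^{-1/2}$; one must verify that this coupling and the associated variance bounds survive in the quadrary bivariate partition, where each node produces a Dirichlet$(\alpha\rho(m+1),\ldots,\alpha\rho(m+1))$ split rather than a Beta split. The calculation is analogous because the marginal variances of a Dirichlet's components still decay at rate $O(\rho(m)^{-1})$, so $\sum_m\rho(m)^{-1/2}<\infty$ continues to force the required $L^1$ convergence of log–densities; once this is in place, the remainder of the argument is routine.
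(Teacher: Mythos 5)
Your proposal is correct and follows the same overall strategy as the paper: lift the problem to the bivariate P\'olya tree on $\mathbb{R}^2$, establish its Kullback--Leibler support property under $\sum_m\rho(m)^{-1/2}<\infty$, and conclude via \cite{barron:98}. The paper's proof lifts $f^*(\theta)$ to the product $f^*(\theta)f^*(r)$ with an \emph{arbitrary} density $f^*(r)$ for the latent resultant and then follows the martingale argument of Theorem 3.1 in \cite{ghosal&al:99}, reducing everything to the moment bound $\E\{|\log(4V_k)|\}=O(k^{-1/2})$ for $V_k\sim\be(k,3k)$, which holds because $\V(4V_k)=3/(4k+1)$ decays at the same rate as $\V(2U_k)=1/(2k+1)$ in the binary case. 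Your two departures are refinements of rigour rather than a different method. First, lifting via the conditional of the centring measure, $\tilde f^*(r,\theta)=\tilde f_0(r\mid\theta)f^*(\theta)$, gives $\KL(\tilde f^*,\tilde f_0)=\KL(f^*,f_0)$ exactly, so the hypothesis transfers to the bivariate problem with nothing further to check; with the paper's product lift one must additionally argue that $f^*(r)$ can be chosen so that the lifted KL divergence is finite. Second, you make explicit the data-processing inequality $\KL(f^*,f)\le\KL(\tilde f^*,\tilde f)$ for the marginalisation $(r,\theta)\mapsto\theta$, which is precisely the step that converts positive prior mass on bivariate KL neighbourhoods into positive prior mass on angular KL neighbourhoods --- necessary because the observed data are the angles only, so Barron's theorem must be applied to the induced angular prior; the paper leaves this implicit. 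The one place where you defer work, the quadrary analogue of the univariate KL-support argument, is exactly the computation the paper carries out, and your stated reason (the $O(\rho(m)^{-1})$ decay of the Dirichlet marginal variances) is the correct one, so there is no gap.
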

\begin{proof}
The idea of the proof is to prove posterior consistency for a bivariate density $f^*(\theta,r)=f^*(\theta)f^*(r)$, where $f^*(r)$ is an arbitrary density for a latent resultant $r$. Following proof of Theorem 3.1 in \cite{ghosal&al:99}, by the martingale convergence theorem, there exists a collection of numbers $\{y_{m,j,k}\}$ in $[0,1]$ such that with probability one
$f^*(\theta,r)=\lim_{M\to\infty}r\prod_{m=1}^M\left\{4y_{m,j_m^{(r\cos\theta)},k_m^{(r\sin\theta)}}\right\}$. Now, by \eqref{eq:dPT} and for $M\to\infty$ we have that $f(\theta,r)=\lim_{M\to\infty}r\prod_{m=1}^M\left\{4Y_{m,j_m^{(r\cos\theta)},k_m^{(r\sin\theta)}}\right\}$. The proof continues analogous to Ghosal's. However, they show that $\eta(k)=\E\{|\log(2U_k)|\}=O(k^{-1/2})$ where $U_k\sim\be(k,k)$. In our case we need to prove that $\E\{|\log(4V_k)|\}=O(k^{-1/2})$ where $V_k\sim\be(k,3k)$. We note that $\E(2U_k)=1$ and $\V(2U_k)=1/(2k+1)$ and that $\E(4V_k)=1$ and $\V(4V_k)=3/(4k+1)$. Since both variances have the same rate of decay, our requirement is also true, proving the result.
\end{proof}

In other words, what Proposition \ref{prop:consistency} states is that, if $\rho(m)=m^\delta$, we need $\delta>2$ to satisfy the posterior consistency property. On the other hand, \cite{watson&al:17} suggest a $\delta$ close to one, say $\delta=1.1$, to maximise the dispersion of a finite tree and make the prior less informative. Other authors have made inference on $\delta$ by assigning it a hyper-prior distribution \citep{hanson&al:08}. Moreover, according to \cite{hanson&johnson:02}, if there are no ties in the data, a finite tree provides the same inference as an infinite tree, as long as $M$ is large enough. Since we will be using finite trees we will follow \cite{watson&al:17}'s suggestion.

\section{Posterior inference}
\label{sec:post}

Let $\Theta_1,\Theta_2,\ldots,\Theta_n$ be a sample of size $n$ such that $\Theta_i\mid f\sim f$, independently, and $f\sim\PPT_M(\alpha,\rho,f_0)$, as in \eqref{eq:PPT}. We consider a data augmentation approach \citep{tanner:91} by defining latent resultant lengths $R_1,R_2,\ldots,R_n$ such that $(\Theta_i,R_i)$ define the polar coordinate transformation of the bivariate $(X_{1i},X_{2i})$ on the plane, for $i=1,\ldots,n$.

Then, the likelihood for $\{\bY_{m,j,k}\}$, $j,k=1,\ldots,2^m$ and $m=0,1,\ldots,M-1$, given the extended data, is 
$$\mbox{lik}(\bY\mid\data)=\prod_{i=1}^n\prod_{m=1}^M Y_{m,j_m^{(r_i\cos\theta_i)},k_m^{(r_i\sin\theta_i)}}=
\prod_{m=1}^M\prod_{j=1}^{2^m}\prod_{k=1}^{2^m}Y_{m,j,k}^{N_{m,j,k}},$$ 
where $N_{m,j,k}=\sum_{i=1}^n I(r_i\cos\theta_i\in B_{m,j})I(r_i\sin\theta_i\in B_{m,k})$. 

Recalling from Definition \ref{def:biPT} that the prior distribution of the vectors $\bY_{m,j,k}$ is Dirichlet with parameter $\balpha_{m,j,k}$, and noting that the likelihood is conjugate with respect to this prior, then the posterior distribution for the branching probability vectors is
\begin{equation}
\label{eq:posty}
\bY_{m,j,k}\mid\data\sim\dir(\balpha_{m,j,k}+\bN_{m,j,k}).
\end{equation}
where $\bN_{m,j,k}=(N_{m+1,2j-1,2k-1},N_{m+1,2j-1,2k},N_{m+1,2j,2k-1},N_{m+1,2j,2k})$. 

Remember that this posterior depends on an extended version of the data. The latent resultant lengths $R_i$'s have to be sampled from their corresponding posterior predictive distribution which is simply 
\begin{equation}
\label{eq:postr}
f(r_i\mid\bY,\theta_i)\propto\left\{\prod_{m=1}^M Y_{m,j_m^{(r_i\cos\theta_i)},k_m^{(r_i\sin\theta_i)}}\right\}f_0(r_i\cos\theta_i,r_i\sin\theta_i)\,r_i,
\end{equation}
for $i=1,\ldots,n$.

With equations \eqref{eq:posty} and \eqref{eq:postr} we can implement a Gibbs sampler \citep{smith&roberts:93}. Sampling from \eqref{eq:posty} is straightforward and to sample from \eqref{eq:postr} we will require a Metropolis-Hastings (MH) step \citep{tierney:94}. For this we propose a random walk proposal distribution such that, dropping the index $i$, at iteration $(t+1)$ we sample $r^*$ from $\ga(\kappa,\kappa/r^{(t)})$ and accept it with probability $$\pi(r^*,r^{(t)})=\frac{f(r^*\mid\bY,\theta)\ga(r^{(t)}\mid\kappa,\kappa/r^*)}{f(r^{(t)}\mid\bY,\theta)\ga(r^*\mid\kappa,\kappa/r^{(t)})}.$$
This latter is truncated to the interval $[0,1]$. The parameter $\kappa$ is a tuning parameter, chosen appropriately to produce good acceptance probabilities. Alternatively, this parameter can also be chosen via adaptive MH \citep[e.g.][]{haario&al:01}. For the examples considered here we used $\kappa=0.5$ and obtained acceptance rates between $0.2$ and $0.4$, which according to \cite{robert&casella:10} are optimal.

Posterior inference of our PPT has been implemented in the \textsf{R}-package \texttt{PPTcirc} \citep{perez&nieto:20}. 

As a measure of goodness of fit, for each prior scenario we will compute the logarithm of the pseudo marginal likelihood (LPML), originally suggested by \cite{geisser&eddy:79}. In particular, we will choose the best value for $\bmu$ and $\alpha$ by comparing these measures.

\section{Numerical studies}
\label{sec:numerical}

\subsection{Simulation study}

We consider a model that is based on the projection of a bivariate normal mixture with four components. Specifically we define $f(\bx)=\sum_{j=1}^4\pi_j\no_2(\bx\mid\bgamma_j,\bI)$, with $\bpi=(0.1,0.2, 0.4,0.3)$ and $\bgamma_1'=(1.5,1.5)$, $\bgamma_2'=(-1,1)$, $\bgamma_3'=(-1,-2)$, $\bgamma_4'=(1.5,-1.5)$, and project it to the unit circle. From this model we took two samples of sizes, $n=50$ and $n=500$. For these datasets we fitted our projected P\'olya tree model $\PPT_M(\alpha,\rho,f_0)$. To define the prior we took $f_0=\no_2(\bmu,\bI)$ such that the model is centred on the projected normal distribution, as in \eqref{eq:PPTn}. We varied the value of the location of the bivariate normal to see the effect in the posterior estimation. In particular we took $\bmu'\in\{(0,0),(1,1),(2,2)\}$ and a function $\rho(m)=m^\delta$, with $\delta=1.1$. We also played with different values of the precision parameter $\alpha\in\{0.5,1,2\}$. The depth of the tree was taken as $M=4$. 

We ran our MCMC for $10,000$ iterations with a burn-in of $1,000$ and keeping one of every 5$^{th}$ iteration after burn-in to produce posterior inference. For each prior scenario we computed the LPML statistic. These numbers are summarised in Table \ref{tab:sim1}. Additionally, in Figure \ref{fig:postsim1} we present posterior estimates for most prior scenarios and for $n=500$. The solid line corresponds to the point estimate and the dotted lines form a 95\% credible interval (CI). We accompany all graphs with a probability histogram of the data in the background. 

From Table \ref{tab:sim1} we can see that the fitting becomes worse (smaller LPML values) when the mean of the bivariate normal goes away from the origin. This behaviour was foreseen since the concentration (dispersion) of the projected P\'olya tree prior highly increases (reduces) for larger $||\bmu||$ (see Figure \ref{fig:priormoments}), being harder for the model to adjust to the data. Depending on the value of $\bmu$, some values of $\alpha$ provide better fitting than others. This latter parameter is usually interpreted as a precision parameter in P\'olya trees \citep{hanson:06}. For smaller values of $\alpha$ the model becomes more nonparametric, and more parametric for larger values. Moreover, $\alpha$ also plays the role of a smoothing parameter. This smoothing effect can be appreciated in the top row in Figure \ref{fig:postsim1}, but not so much in the lower rows. What is interesting is that when $\bmu'=(2,2)$ the posterior estimate is highly dependent on the prior and barely moves with the data, despite the large dataset of size $n=500$. 

The best fitting, according to the LPML, is obtained when $\bmu'=(0,0)$ and $\alpha=2$. This is regardless of the data size $n$. The fitting for $n=500$ is depicted in the top-right graph in Figure \ref{fig:postsim1}. The posterior estimate follows smoothly the path of the data. In our experience we do not advise to go beyond $\alpha=2$ unless the data size is very large. All scenarios in Table \ref{tab:sim1} were re-ran with a deeper P\'olya tree with $M=6$, no real advantage was observed in terms of the LPML statistic, but the running time was a lot larger. Perhaps for the case when $\bmu'=(2,2)$, posterior estimates are somehow better than with $M=4$, but still a lot worse than with $\bmu'=(0,0)$. \cite{hanson:06} also obtains a similar conclusion, showing that the LPML stabilises and does not improve for larger $M$, justifying so the use of finite trees. 

Finally, we compare our results with alternative models, specifically we consider a parametric projected normal \citep{nunez&gutierrez:05} and a nonparametric DPM of projected normals \citep{nunez&al:15}. The LPML statistics for these two models, included at the bottom of Table \ref{tab:sim1}, show worse fitting than our PPT. Additionally, comparing the running times for the dataset of size $n=500$, the DPM of projected normals took $1.5$ hours, whereas the PPT took only $10$ minutes for the same amount of iterations.

\subsection{Real data analysis}

In this section, we apply our methodology to the analysis of a real dataset. A study of the interaction among species was carried out as part of a larger research project at El Triunfo biosphere reserve in Mexico in 2015. The use of camera-trapping strategies allowed ecologists to generate temporal activity information (time of the day) for three animal species, peccary, tapir and deer. The data sizes were 16, 35 and 115, respectively, and are  reported in Table \ref{tab:data}. 
 
This data sets has been previously analysed by \cite{nunez&al:18}, using DPM of projected normals to estimate overlapping coefficient among species. Here we fitted our projected P\'olya tree model for each of the directions of the three animals. We centred our prior on a spherical bivariate normal with $\bmu'=(0,0)$, which produces a very dispersed prior projected tree. The concentration function was $\rho(m)=m^{1.1}$ and the depth of the tree was $M=4$. We tried different values of the precision/smoothing parameter $\alpha\in\{0.5,1,2\}$ to compare. The MCMC specifications were the same as for the simulated data, and for each value of $\alpha$ we computed the statistic LPML. 

The goodness of fit statistics are reported in Table \ref{tab:real}. Interestingly, for the tapir and deer datasets the best fitting is achieved with $\alpha=2$, whereas for the peccary dataset the largest LPML value is obtained with $\alpha=0.5$. This is explained by the small data size of peccary which only has $n=16$ points. Alternatively, instead of selecting the best value for $\alpha$ from a range of values, we could place a hyper-prior distribution, say $\alpha\sim\ga(c_\alpha,d_\alpha)$, and update it with its corresponding conditional posterior distribution, which has the form $$f(\alpha\mid\bY)\propto\left\{\prod_{m=0}^{M-1}\prod_{j=1}^{2^m}\prod_{k=1}^{2^m}
\dir(\by_{m,j,k}\mid\balpha_{m,j,k})\right\}\ga(\alpha\mid c_\alpha,d_\alpha),$$ with $\balpha_{m,j,k}=(\alpha\rho(m+1),\ldots,\alpha\rho(m+1))$. Obtaining draws from this distribution requires a MH step. When taking $c_\alpha=1$ and $d_\alpha=2$ we obtain a-priori that $\P(\alpha<2)=0.98$, so we mainly concentrate on values smaller than 2. This produces posterior 95\% CI for $\alpha$, for the three datasets: $(0.17,1.49)$ for peccary, $(0.40,3.00)$ for tapir, and $(0.45,2.61)$ for deer, which are consistent with the selected best values of $\alpha$. The corresponding LPML values are also reported in Table \ref{tab:real}.

Posterior density estimates with the best fitting settings are shown in Figure \ref{fig:postreal} (first column). Point estimates correspond to the solid lines and 95\% CIs to the dotted lines. In all cases, density estimates are multimodal, perhaps for the tapir data the first mode is not so clear. For the peccaries, the directions where they move have a bounded support, mainly from 1.5 to 4.7 radians with a somehow uniform pattern. This range corresponds, approximately, to the time of the day from 6:00 to 18:00 in a 24-hours clock. Looking deeper into the density estimate, we appreciate a bimodal behaviour with peaks at 10:00 and 17:00 hours. On the other hand, tapirs and deer appear everywhere. The mode direction where both tapirs and deer are seen is around 18:00 hrs. ($3\pi/2$ radians).

We use the mean $\nu_\theta$, as in \eqref{eq:moments}, to summarise the preferred direction. We have assumed that the density $f$ of the directions $\theta$ is nonparametric, therefore the mean of $\theta$ is not a singe value, but a set of values whose probability distribution can be obtained. Posterior distribution for the mean direction of the three animals are presented in Figure \ref{fig:postmeans} as boxplots. On a 24-hours clock, the mean direction for peccaries goes from 10:02 to 14:49 hours ($2.63$ to $3.88$ radians) with 95\% probability, the mean direction for tapirs goes from 18:11 to 23:09 hours ($-1.52$ to $-0.22$ radians) with 95\% probability, and finally, the mean direction for deer goes from 16:37 to 21:35 hours ($-1.93$ to $-0.63$ radians) with 95\% probability. Broadly speaking we can say that peccaries have a preferred activity-time around midday, which is totally different to the other two animals whose preferred times are around 20:30 hours, for tapirs, and around 19:00 hours, for deer. We can formally test whether $\nu_\theta^{tapir}=\nu_\theta^{deer}$ by computing a 95\% credible interval of the difference, $\nu_\theta^{tapir}-\nu_\theta^{deer}\in(-0.48,1.31)$ radians, which clearly includes the value of zero. Additionally, $\P(\nu_\theta^{tapir}>\nu_\theta^{deer}\mid\data)=0.8$, which is not big enough to declare a difference. 

Although our projected P\'olya tree model produces smooth densities, further smoothing can be achieved if we also assign a hyper-prior distribution to the location parameter $\bmu$ of the projected normal centering measure $f_0$, inducing a mixture in the nested partitions. If in particular we take $f(\mu_1,\mu_2)=\no(\mu_1\mid\gamma_0,\tau_\mu)\no(\mu_2\mid\gamma_0,\tau_\mu)$, then the conditional posteriors are conjugate and become $f(\mu_1\mid\br)=\no\left(\mu_1\mid\left(\sum_{i=1}^n r_i\cos(\theta_i)+\tau_\mu\gamma_0\right)/(n+\tau_\mu),n+\tau_\mu\right)$ and $f(\mu_2\mid\br)=\no\left(\mu_2\mid\left(\sum_{i=1}^n r_i\sin(\theta_i)+\tau_\mu\gamma_0\right)/(n+\tau_\mu),n+\tau_\mu\right)$. We fitted this mixture of PPT with $\gamma_0=0$, $\tau_\mu=1$, a hyper-prior $\alpha\sim\ga(1,2)$ and the same MCMC specifications as above. The LPML values (ante penultimate row in Table \ref{tab:real}) are not as good as those without mixing, however posterior density estimates turn out to be a lot smoother (see second column in Figure \ref{fig:postreal}). Posterior means of the location parameters are $\widehat\bmu'=(-0.81,-0.07)$ for peccary, $\widehat\bmu'=(0.28,-0.59)$ for tapir and $\widehat\bmu'=(0.08,-0.34)$ for deer. 

Finally, we also compare with the parametric projected normal and the nonparametric DPM of projected normals. The corresponding LPML statistics are reported in the last two rows of Table \ref{tab:real}. Now, comparing the best fit of our PPT model with the two competitors we have very interesting findings. For the peccary dataset our proposal is by far the best model with the DPM in second and the parametric model in third place. For the tapir dataset the three models practically achieve the same fit. In an attempt to explain why this happens, we recall that circular data have no beginning and end points, so histograms are better seen in a circle. The first block of points, close to zero, can be seen as a continuation of the larger points, close to $2\pi$, and therefore we could appreciate a single predominant mode characterising the data, thus a parametric (unimodal) model, like the projected normal, could do a good job describing this dataset. Lastly, for the deer data, the best fit is obtained by the DPM, followed closely by our PPT and the parametric model in a far third place. 

As suggested by one of the referees, it is straightforward to compute a Bayes factor for testing the adequacy of the underlying projected normal model by using the Savage-Dickey ratio \citep{dickey:71,hanson:06}. For the hypothesis $H_0:\theta_i\sim{\rm Proj.Normal}\iff Y_{m,j,k}=1/4\;\forall(m,j,k)$ versus $H_1:\theta_i\sim\PPT\iff Y_{m,j,k}\neq 1/4 \mbox{ for some }(m,j,k)$ the Bayes factor would be $BF_{10}=f_{\bY}(1/4)/f_{\bY\mid\data}(1/4)$, that is the ratio of the prior and posterior densities of $\bY$ evaluated at the null hypothesis. For both peccary and tapir datasets we obtain $BF_{10}=1.61$ which barely favours the PPT alternative, whereas for deer dataset $BF_{10}=0.20$ slightly supporting the parametric model.

\section{Concluding remarks}
\label{sec:concl}

We have proposed a Bayesian nonparametric model for circular data. Our proposal is based on the projection of a bivariate P\'olya tree to the unit circle. Random densities obtained from the model turned out to be smooth. This is in contrast to the bivariate densities obtained from a bivariate P\'olya tree which are discontinuous at the boundaries of the partitions. 

Posterior inference is simply done by augmenting the data with (unobserved) latent resultants and updating the bivariate tree. To simplify the posterior dependence on the prior choice of $\alpha$, we suggest to place a hyper-prior on this parameter, with minimal extra effort in sampling from its conditional posterior distribution. Extra smoothing on the density estimates can be achieved by placing a hyper prior on the parameter $\bmu$ producing a mixture of PPT. Comparing the performance of our model to other alternatives, for the three datasets studied here, we showed that our proposal is a good competitor, with the advantage of the simplicity in the posterior inference. 

Generalising our model to directional data with more than two dimensions would require to project a multivariate P\'olya tree on $\mathbb{R}^k$ to the unit sphere $\mathbb{S}^k$. This can be done straightforwardly by generalising the nested partition to include sets $\Pi=\{B_{m,j_1,\ldots,j_k}\}$, where at each level $m$ the partition size would be $2^{km}$. Studying the properties of this generalisation remains open and is left for future work. 

Additionally, the inclusion of covariates in the projected (bivariate) P\'olya tree also deserves study. Specifically, if $\bX_i'=(X_{1i},X_{2i})=(R_i\cos\Theta_i,R_i\sin\Theta_i)$, where $\Theta_i$ are observable angles and $R_i$ are latent variables, and if $\bZ_i$ is a $p$-vector of covariates, then a regression model would be $\bX_i=B\bZ_i+\bepsilon_i$, where $B$ is a $(2\times p)$-matrix of coefficients and $\bepsilon_i'=(\epsilon_{1i},\epsilon_{2i})\mid f$ are i.i.d. and $f\sim\PT_M(\alpha,\rho,F_0)$ is a finite bivariate P\'olya tree as in \eqref{eq:dPT}. The induced distribution of $\Theta_i\mid\bZ_i$ would be a projected P\'olya tree regression model.

\section*{Acknowledgements}

This work was partial supported by the National System of Researchers, Mexico. The first author acknowledges support from \textit{Asociaci\'on Mexicana de Cultura, A.C.} Finally, the authors are deeply thankful to professor Eduardo Mendoza from \textit{Universidad Michoacana de San Nicol\'as de Hidalgo}, Mexico for the corresponding permission to use the data from project at El Triunfo biosphere reserve in Mexico.

\newpage

\begin{table}
\caption{LPML goodness of fit measures for simulated data. First block corresponds to our PPT.}
\label{tab:sim1}
\begin{center}
\begin{tabular}{cccc} \hline \hline
 	& 	&	\multicolumn{2}{c}{LPML} \\ 
$\bmu'$	&	$\alpha$	&	$n=50$	&	$n=500$ \\ \hline
$(0,0)$ & $0.5$ & $-88.10$ & $-849.73$ \\
$(0,0)$ & $1.0$ & $-87.07$ & $-848.25$ \\
$(0,0)$ & $2.0$ & $\bf -86.98$ & $\bf -847.70$ \\
$(1,1)$ & $0.5$ & $-89.13$ & $-848.04$ \\
$(1,1)$ & $1.0$ & $-91.41$ & $-848.07$ \\
$(1,1)$ & $2.0$ & $-95.82$ & $-851.15$ \\
$(2,2)$ & $0.5$ & $-118.06$ & $-1050.26$ \\
$(2,2)$ & $1.0$ & $-129.12$ & $-1065.09$ \\
$(2,2)$ & $2.0$ & $-147.12$ & $-1093.84$ \\ \hline
\multicolumn{2}{c}{Proj.Normal}  & $-87.54$ & $-864.90$ \\
\multicolumn{2}{c}{DPM Proj.Normal} &  $-87.01$ & $-848.29$ \\
\hline \hline
\end{tabular}
\end{center}
\end{table}

\begin{table}
\caption{Temporal activity (in radians) from camera trap records relating to the presence of three mammalian species at El Triunfo reserve.}
\label{tab:data}
\centering
\begin{tabular}{llllllll} \hline \hline
\multicolumn{8}{c}{Peccary} \\ \hline
3.0757 & 2.7422 & 3.2214 & 0.8017 & 2.3065 & 2.6849 & 4.5517 & 4.3300 \\ 2.3421 & 4.6541 & 2.2754 & 2.4580 & 3.3150 & 4.0887 & 4.4092 & 4.2632 \\  
\hline	\multicolumn{8}{c}{Tapir} \\ \hline
3.3352 & 4.6813 & 4.7835 & 5.4591 & 5.4929 & 3.6559 & 4.9567 & 4.5505 \\ 3.7114 & 4.6214 & 5.5011 & 0.7815 & 0.4264 & 5.6929 & 4.6098 & 0.0712 \\ 4.7340 & 4.7583 & 0.8511 & 4.5465 & 4.0871 & 1.3747 & 4.8558 & 0.9962 \\ 4.9629 & 2.7328 & 5.9844 & 0.6099 & 5.9213 & 1.9393 & 6.2521 & 4.7322 \\ 4.8155 & 5.1034 & 0.5203 \\
\hline	\multicolumn{8}{c}{Deer} \\ \hline
4.5338 & 4.9636 & 2.3963 & 0.1049 & 0.6435 & 1.6665 & 2.7504 & 0.5619 \\ 5.2474 & 4.5670 & 4.4406 & 5.3001 & 4.6440 & 0.8320 & 1.5593 & 2.6858 \\ 5.3614 & 1.5104 & 2.1596 & 4.5811 & 4.9057 & 6.1155 & 1.9216 & 3.6685 \\ 4.7676 & 4.1158 & 3.3225 & 1.0981 & 4.7476 & 2.0472 & 4.0766 & 4.4075 \\ 4.4901 & 5.6538 & 5.4914 & 2.0064 & 5.8532 & 0.0833 & 2.3170 & 0.6101 \\ 5.3250 & 0.7459 & 3.4606 & 4.8188 & 4.4032 & 4.2024 & 1.5408 & 5.3556 \\ 5.2969 & 5.9074 & 5.1198 & 4.7095 & 4.9927 & 1.5943 & 4.8544 & 0.9802 \\ 4.7600 & 4.8139 & 4.9786 & 2.3377 & 5.0841 & 4.1202 & 6.2377 & 2.7648 \\ 4.7023 & 4.3310 & 2.5126 & 6.0751 & 2.2459 & 1.2403 & 2.7941 & 5.0400 \\ 5.3202 & 1.4342 & 3.2619 & 1.9663 & 4.7633 & 5.7232 & 2.1505 & 3.9069 \\ 0.8642 & 3.5219 & 4.9393 & 2.3317 & 4.0359 & 2.0050 & 5.4570 & 4.6069 \\ 6.0874 & 0.1445 & 0.9540 & 3.4935 & 1.6002 & 5.2741 & 0.5729 & 6.1006 \\ 1.0324 & 4.8253 & 5.9624 & 3.5083 & 4.3276 & 4.6632 & 0.6040 & 0.7223 \\ 3.4750 & 5.1140 & 4.9180 & 4.2155 & 4.5710 & 0.5368 & 5.1135 & 3.1823 \\ 3.1831 & 4.4513 & 5.5457 \\
\hline 	\hline 
\end{tabular}
\end{table}

\begin{table}
\caption{LPML goodness of fit measures for El Triunfo Reserve data. First two blocks correspond to our PPT model.}
\label{tab:real}
\begin{center}
\begin{tabular}{cccc} \hline \hline
$\alpha$ & Peccary & Tapir & Deer \\ \hline
$0.5$ & $\bf -23.05$ & $-61.02$ & $-208.31$ \\
$1$ & $-23.22$ & $-60.20$ & $-206.92$ \\
$2$ & $-24.10$ & $\bf -59.57$ & $\bf -205.68$ \\ 
$\ga(1,2)$ & $-23.40$ & $-60.15$ & $-206.77$ \\ \hline
$\alpha\sim\ga(1,2)$, $\mu_j\sim\no(0,1)$ &  $-31.58$ & $-65.49$ & $-212.17$ \\ \hline
Proj.Normal  & $-26.52$ & $-59.43$ & $-207.54$	\\
DPM Proj.Normal &  $-24.64$ & $-59.56$ & $-204.31$ \\ 
\hline \hline
\end{tabular}
\end{center}
\end{table}

\begin{figure}
\centerline{\includegraphics[scale=0.31]{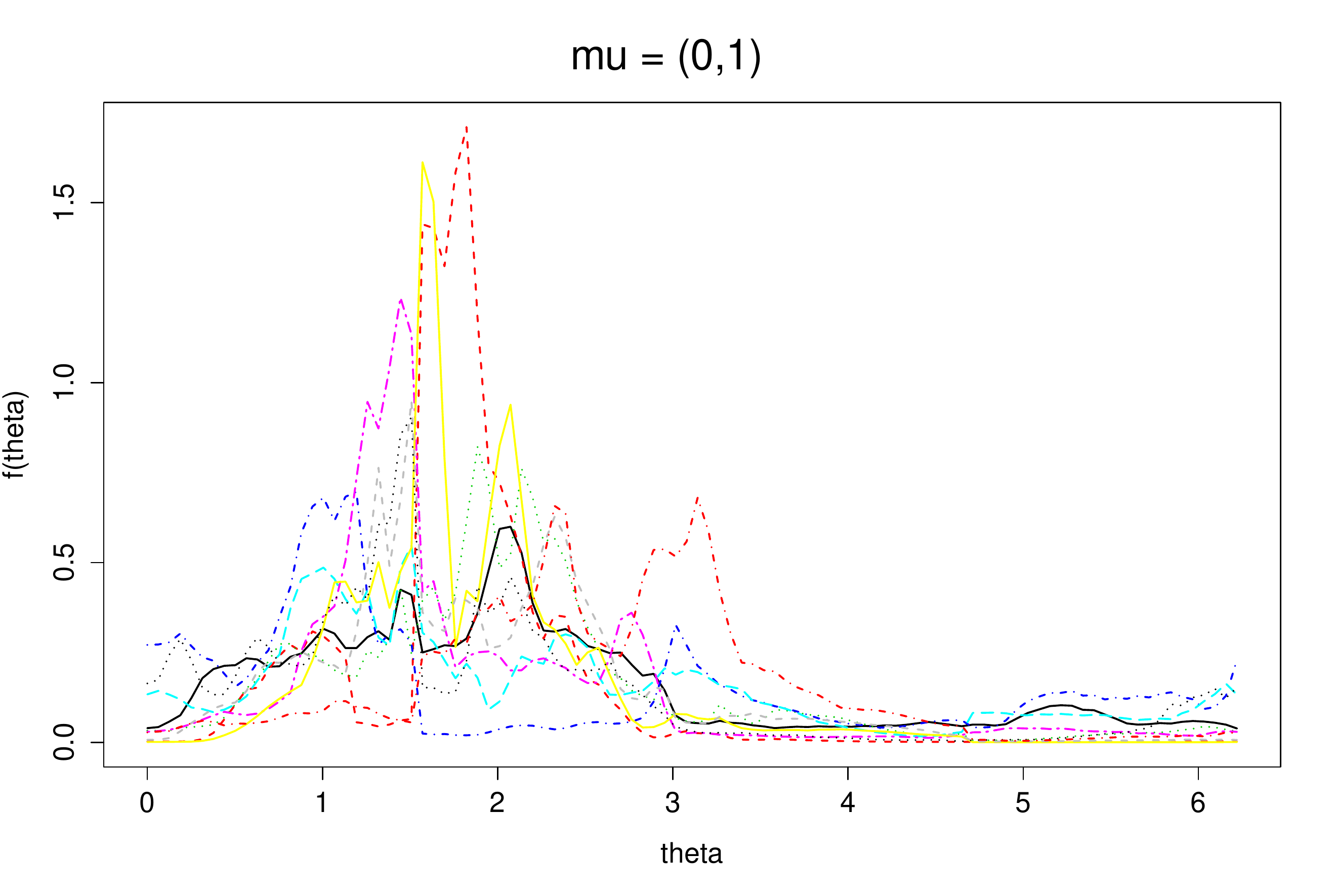}
\includegraphics[scale=0.31]{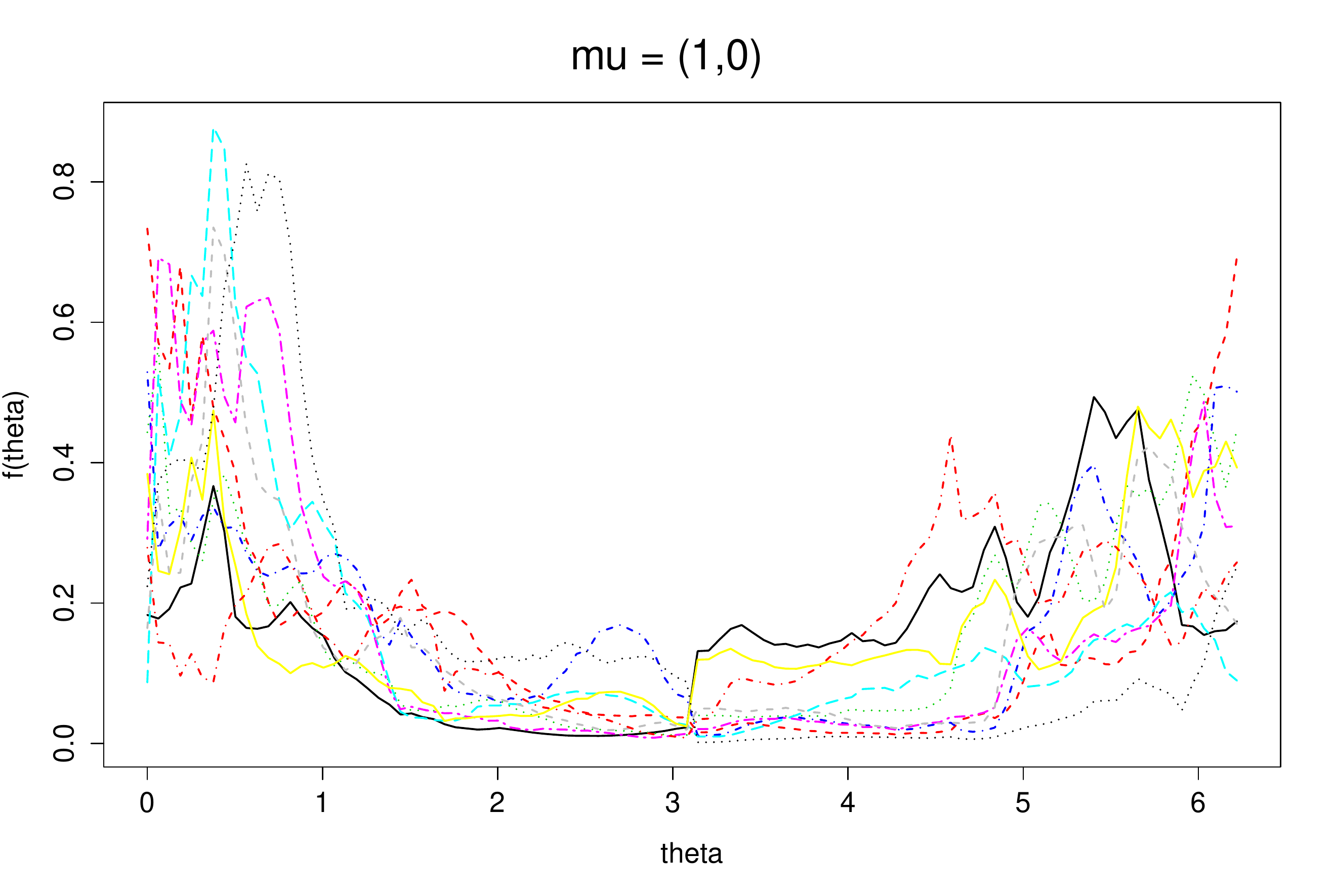}}
\centerline{\includegraphics[scale=0.31]{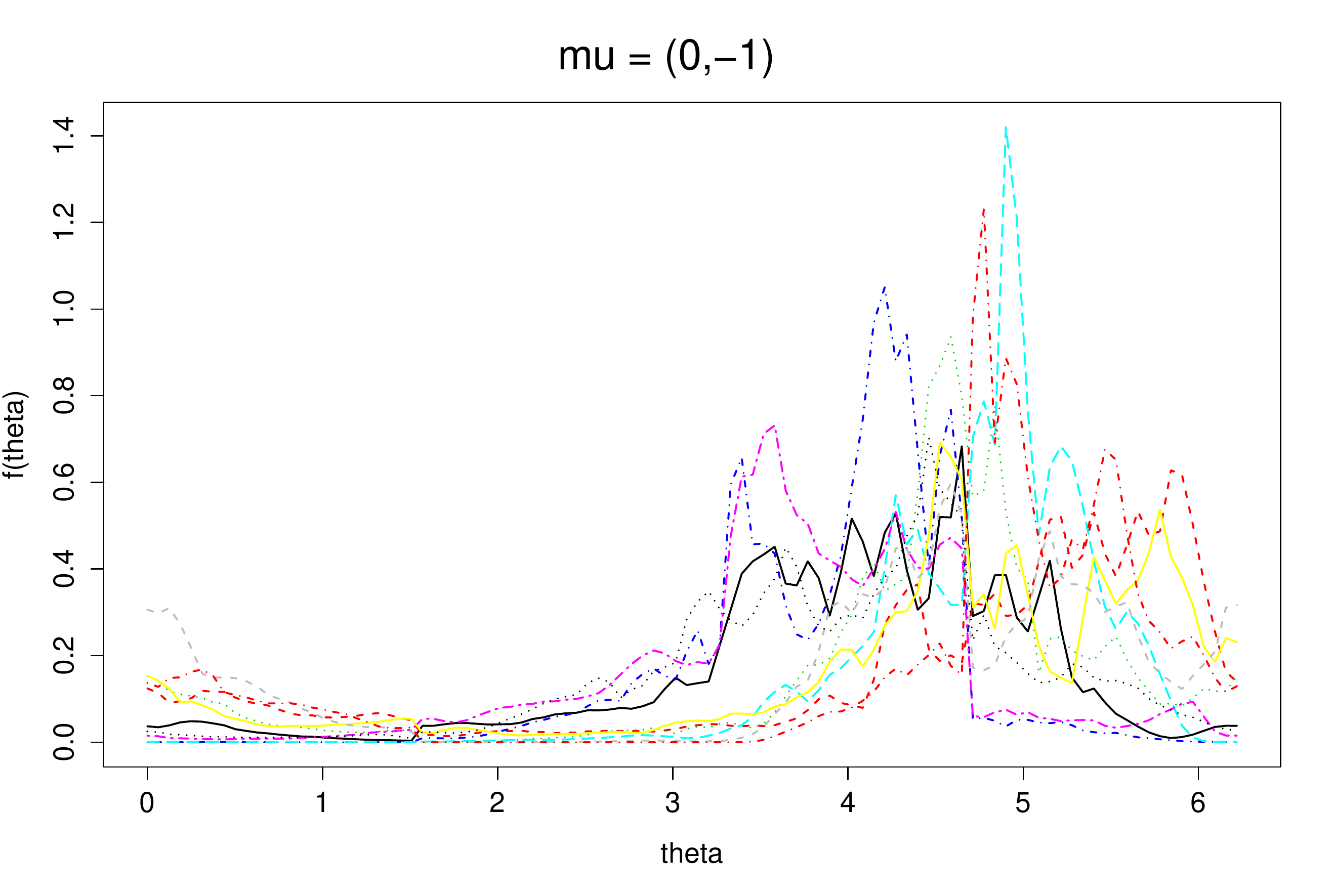}
\includegraphics[scale=0.31]{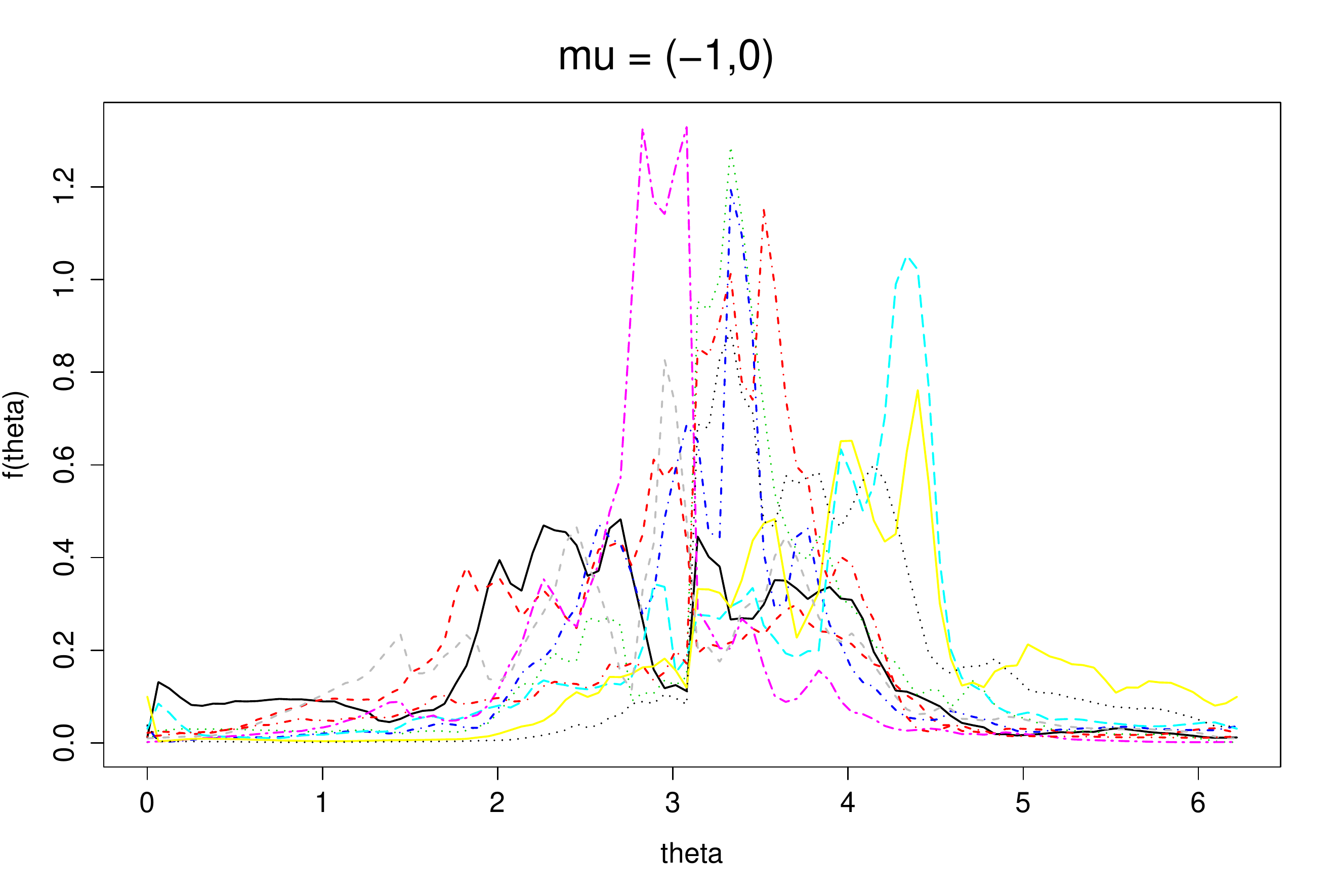}}
\caption{{\small Ten simulated densities of the prior projected P\'olya tree with $M=4$, $\alpha=1$, $\delta=1.1$, for varying $\bmu'$.}}
\label{fig:priorppt1}
\end{figure}

\begin{figure}
\centerline{\includegraphics[scale=0.31]{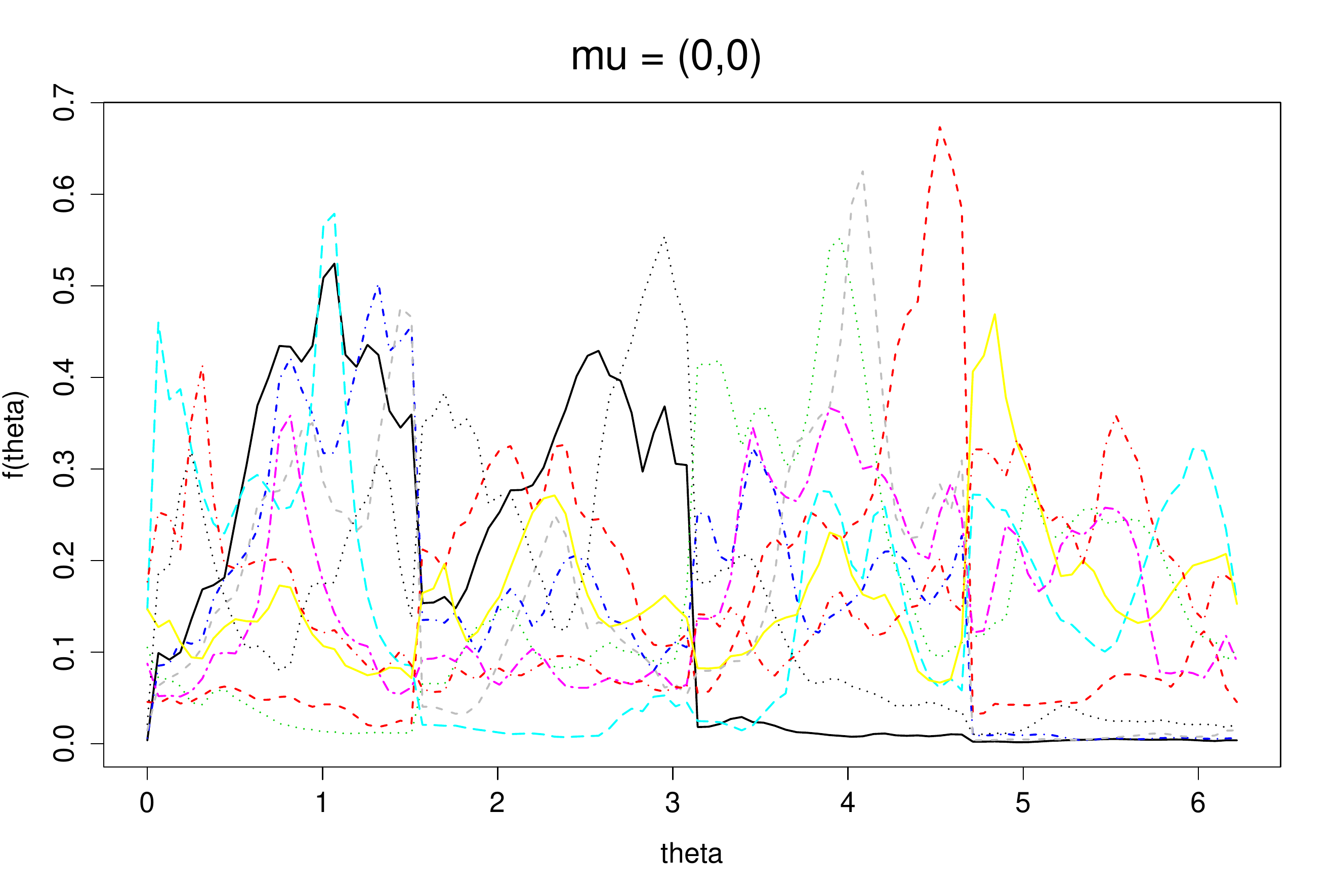}
\includegraphics[scale=0.31]{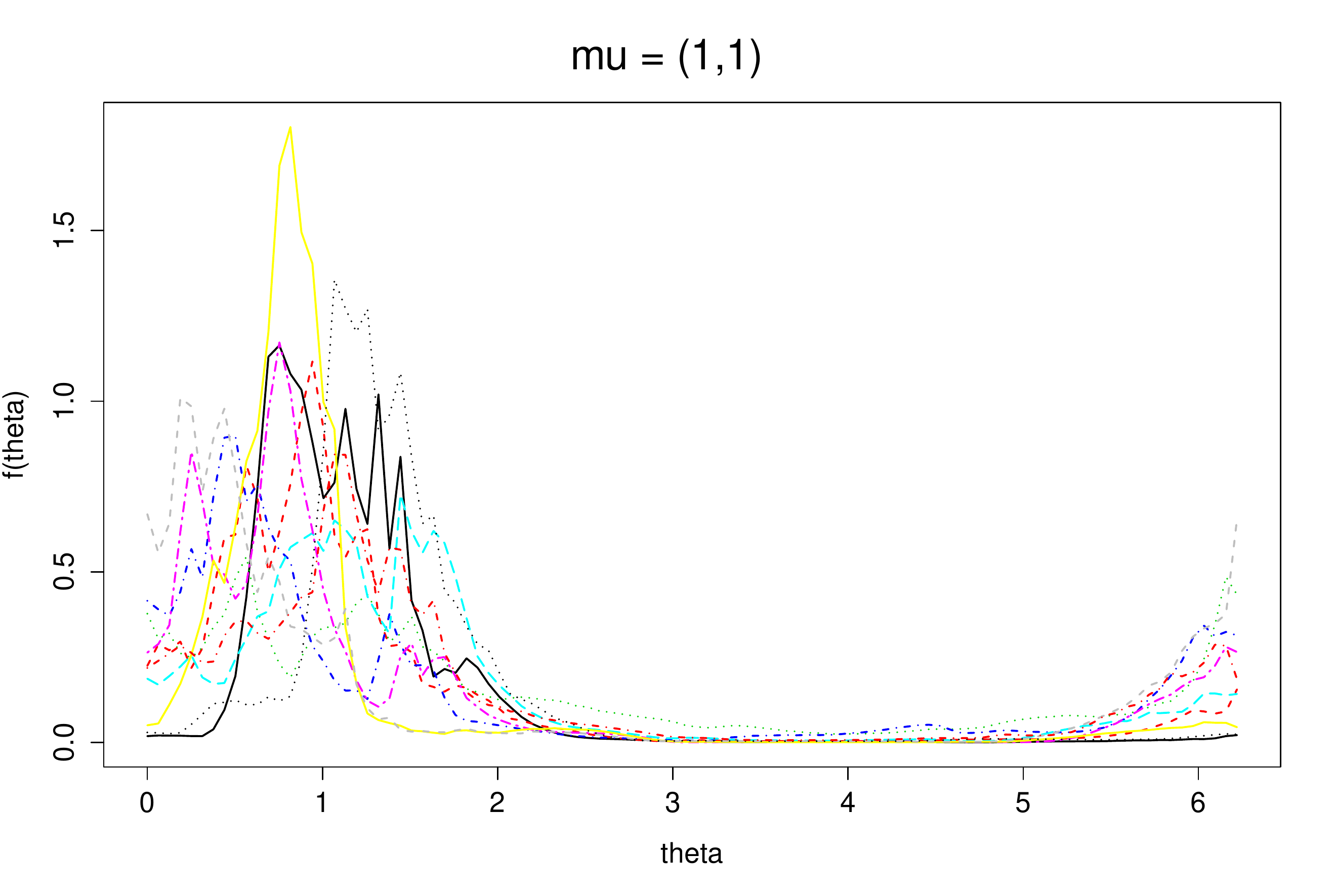}}
\centerline{\includegraphics[scale=0.31]{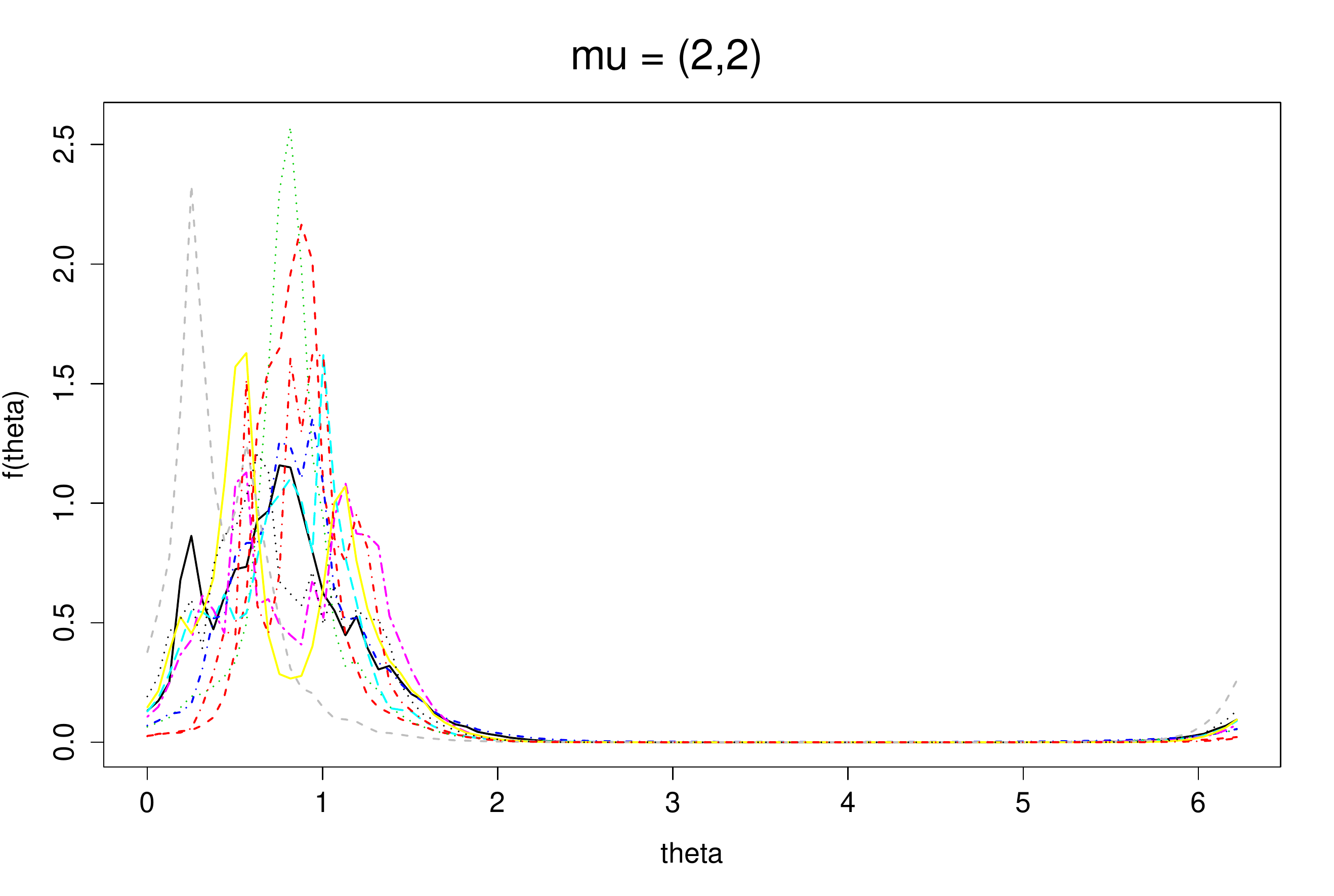}
\includegraphics[scale=0.31]{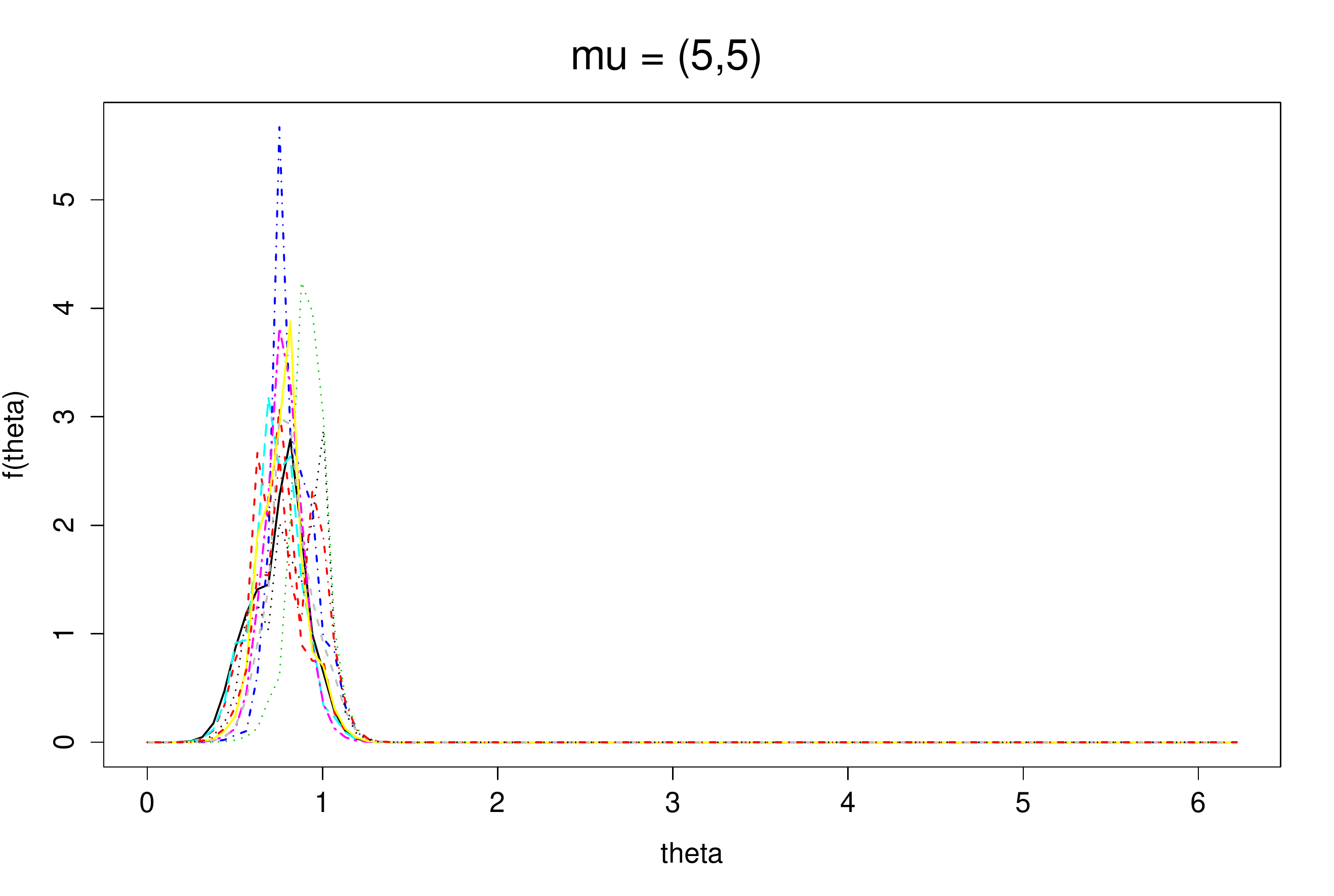}}
\caption{{\small Ten simulated densities of the prior projected P\'olya tree with $M=4$, $\alpha=1$, $\delta=1.1$, for varying $\bmu'$.}}
\label{fig:priorppt2}
\end{figure}

\begin{figure}
\centerline{\hspace{1cm}Mean $\nu_{\theta}$\hspace{6.3cm}Concentration $\varrho_\theta$}
\centerline{\includegraphics[scale=0.32]{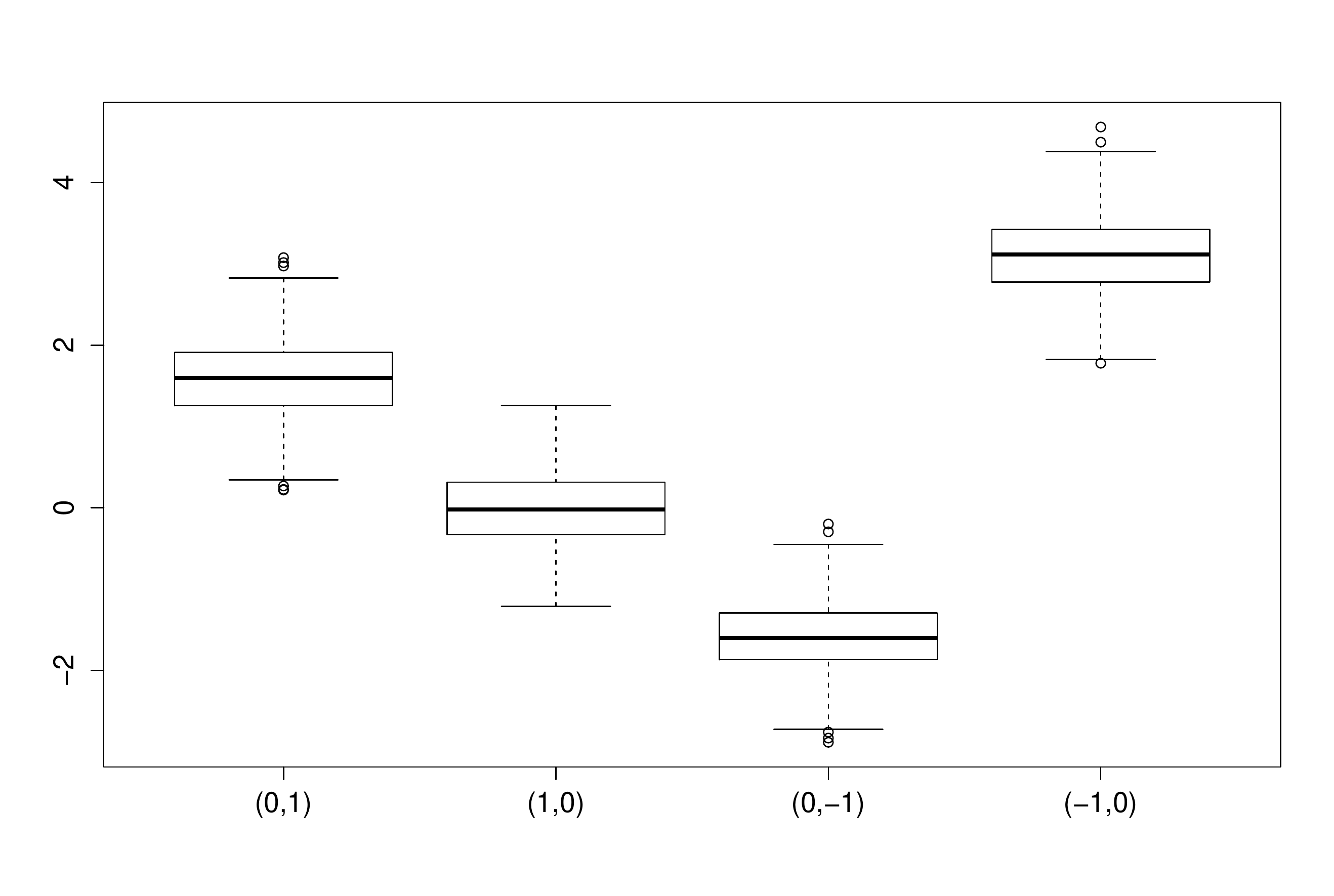}
\includegraphics[scale=0.32]{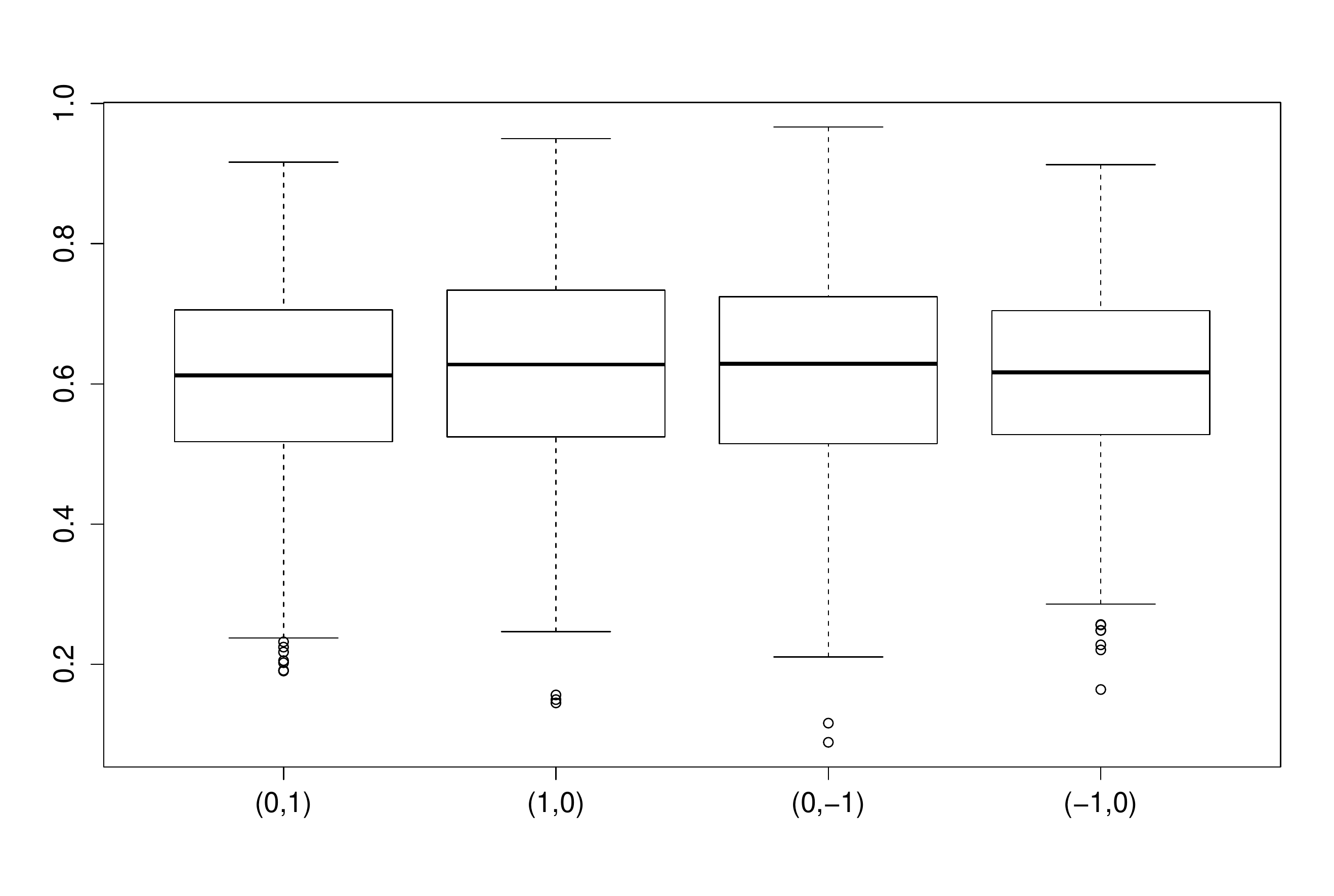}}
\centerline{\includegraphics[scale=0.32]{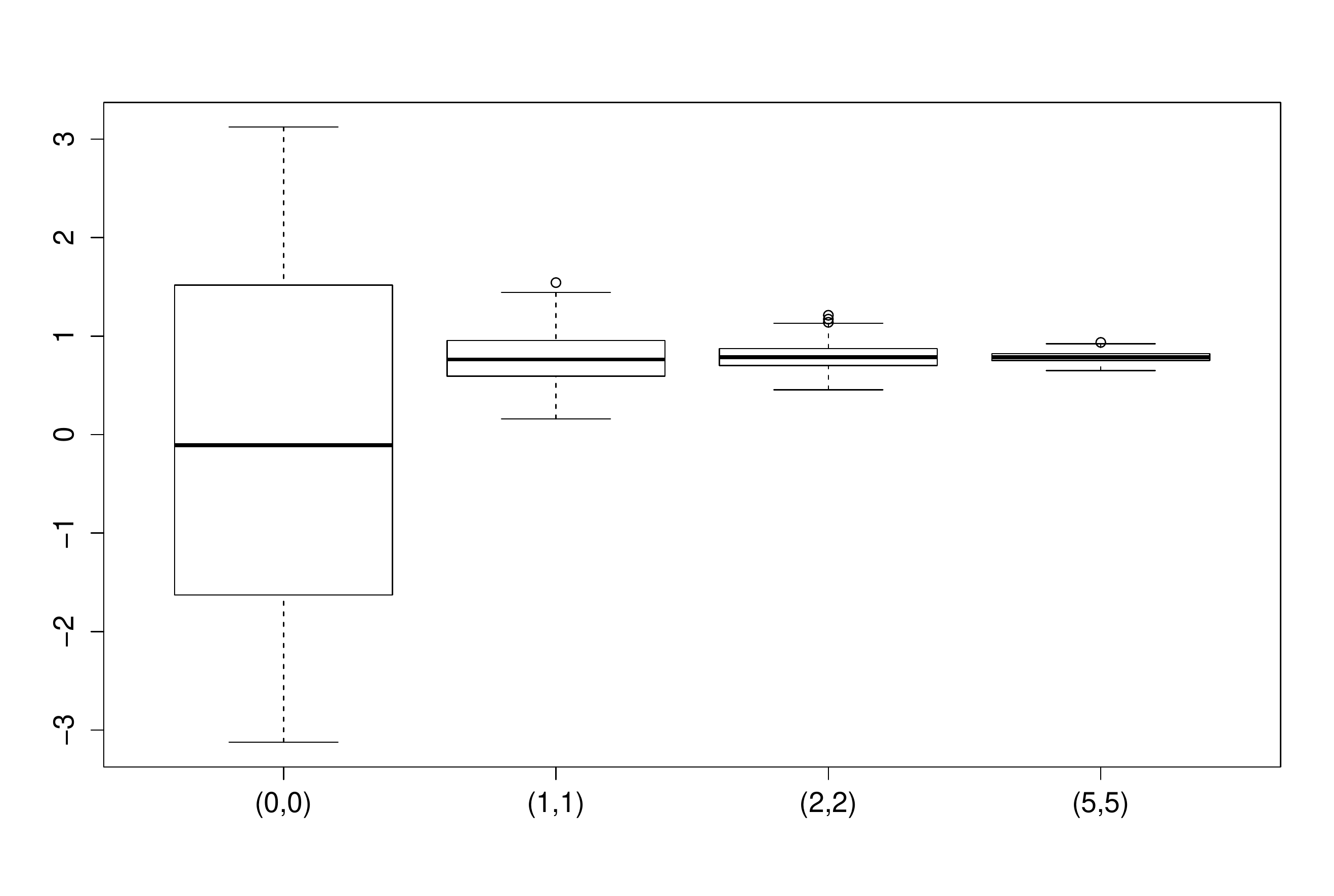}
\includegraphics[scale=0.32]{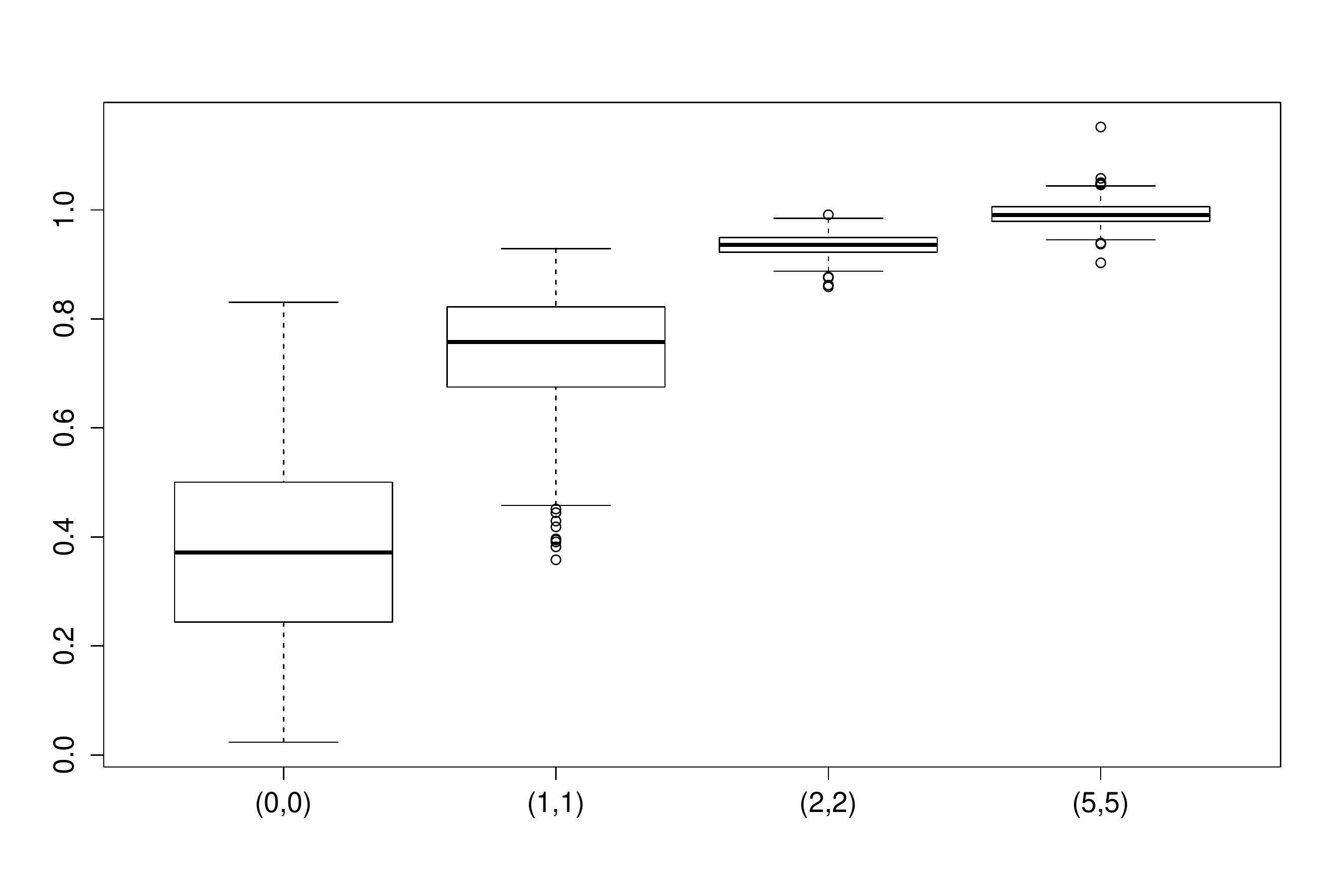}}
\caption{{\small Prior distribution (boxplot) of moments for the projected P\'olya tree with $M=4$, $\alpha=1$, $\delta=1.1$, for varying $\bmu'$.}}
\label{fig:priormoments}
\end{figure}

\begin{figure}
\centerline{\includegraphics[scale=0.31]{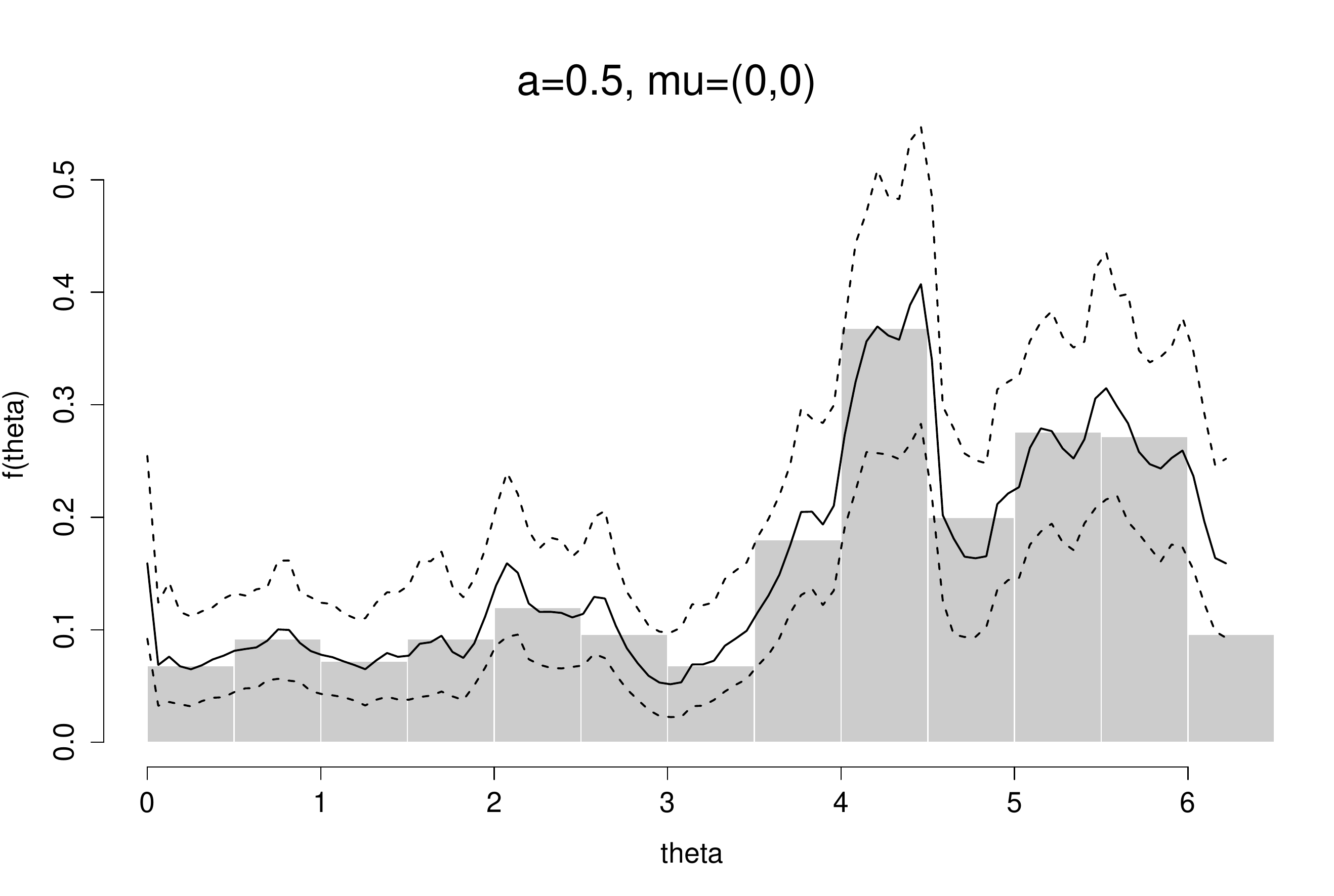}
\includegraphics[scale=0.31]{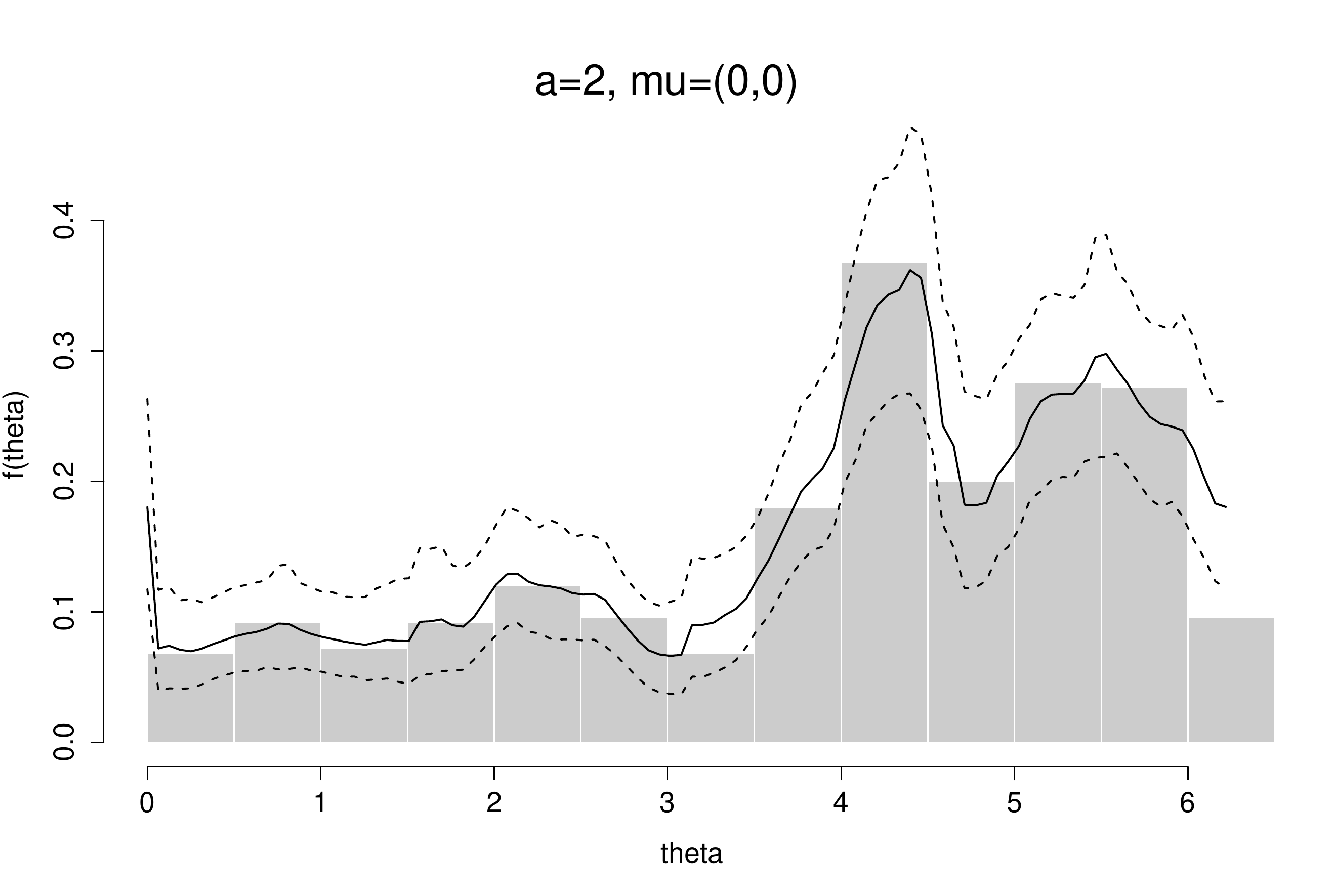}}
\centerline{\includegraphics[scale=0.31]{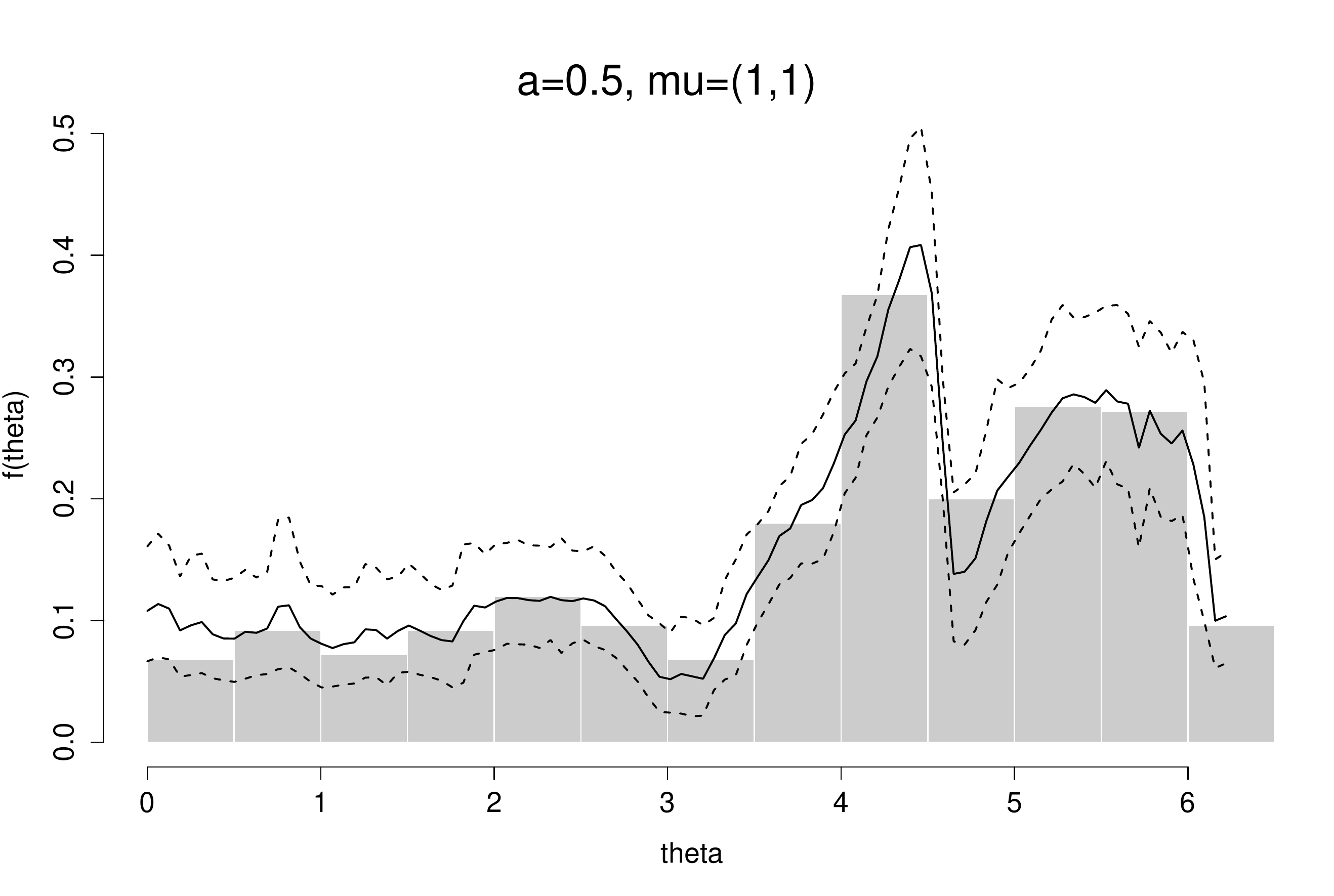}
\includegraphics[scale=0.31]{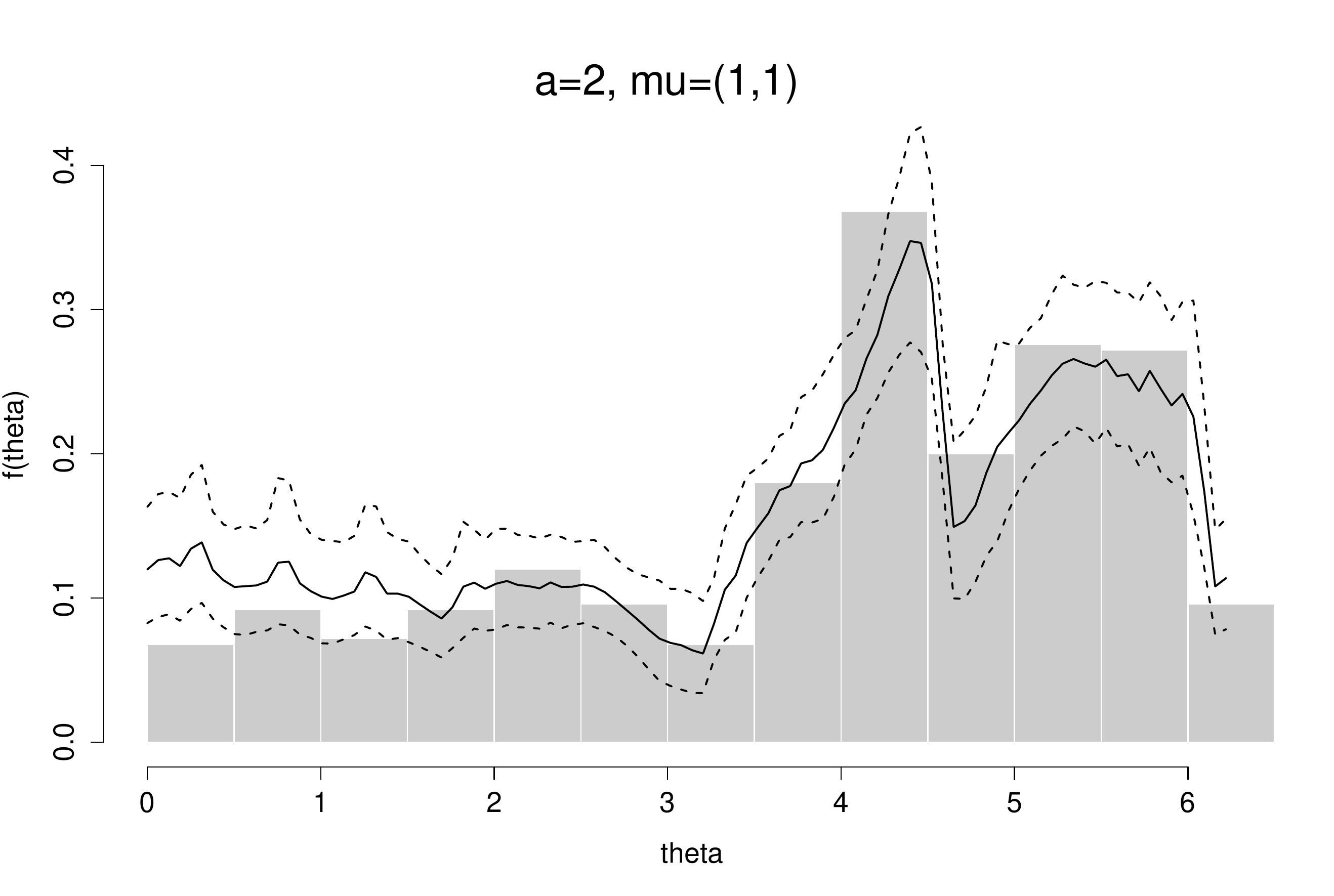}}
\centerline{\includegraphics[scale=0.31]{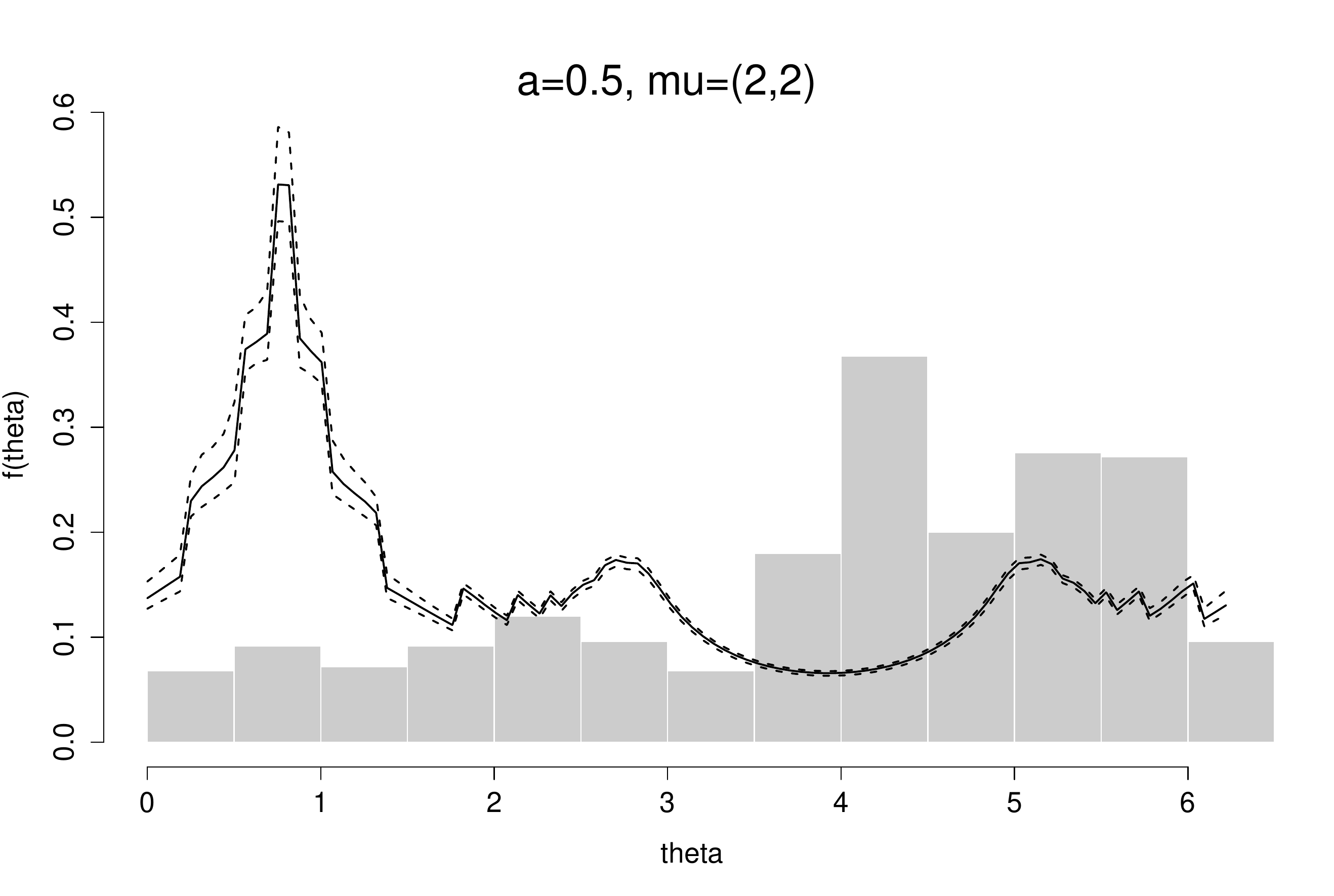}
\includegraphics[scale=0.31]{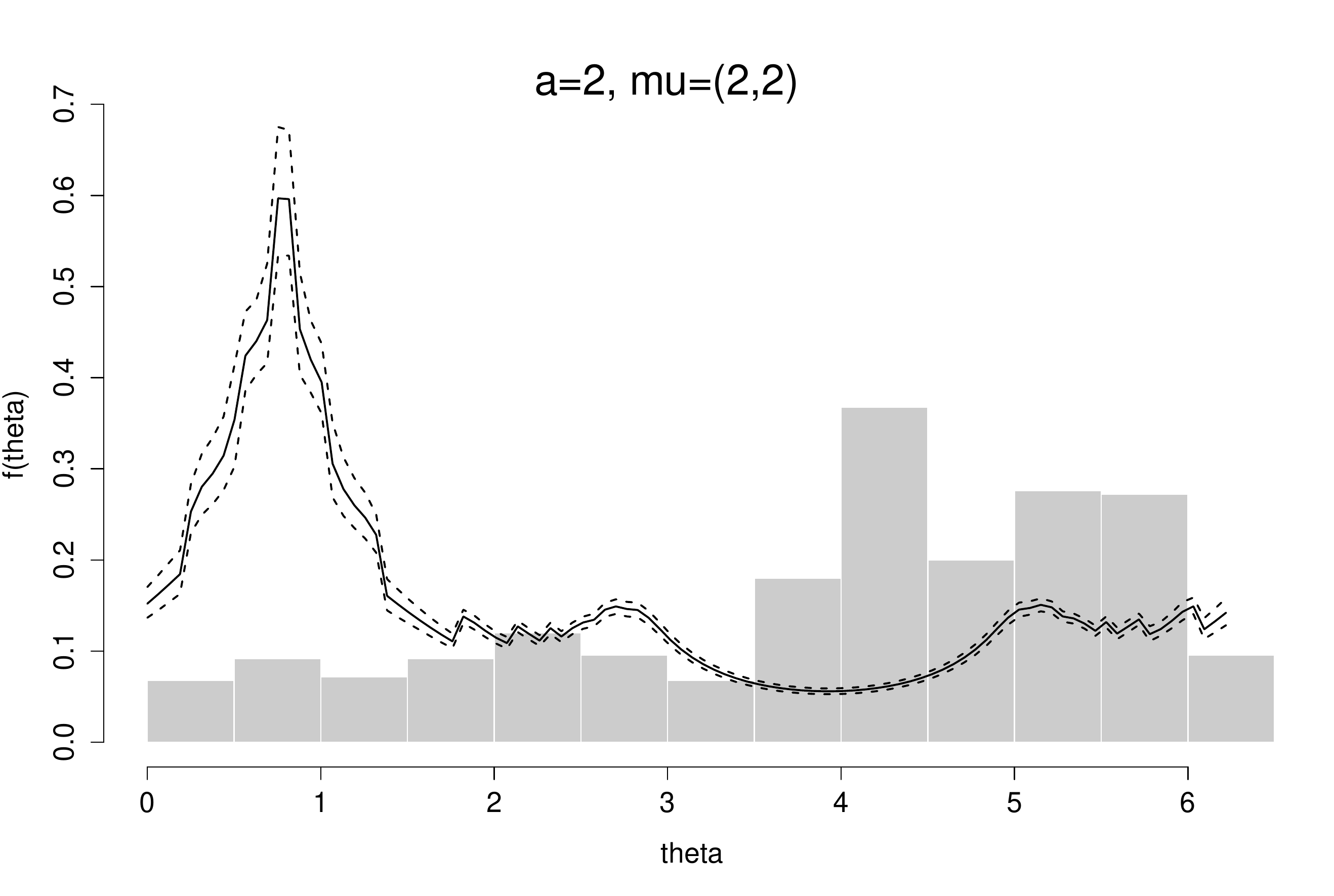}}
\caption{{\small Posterior density estimates for simulated data with $n=500$. Across columns $\alpha=0.5$ and $\alpha=2$. Across rows $\bmu'=(0,0)$, $\bmu'=(1,1)$ and $\bmu'=(2,2)$.}}
\label{fig:postsim1}
\end{figure}

\begin{figure}
\centerline{\includegraphics[scale=0.31]{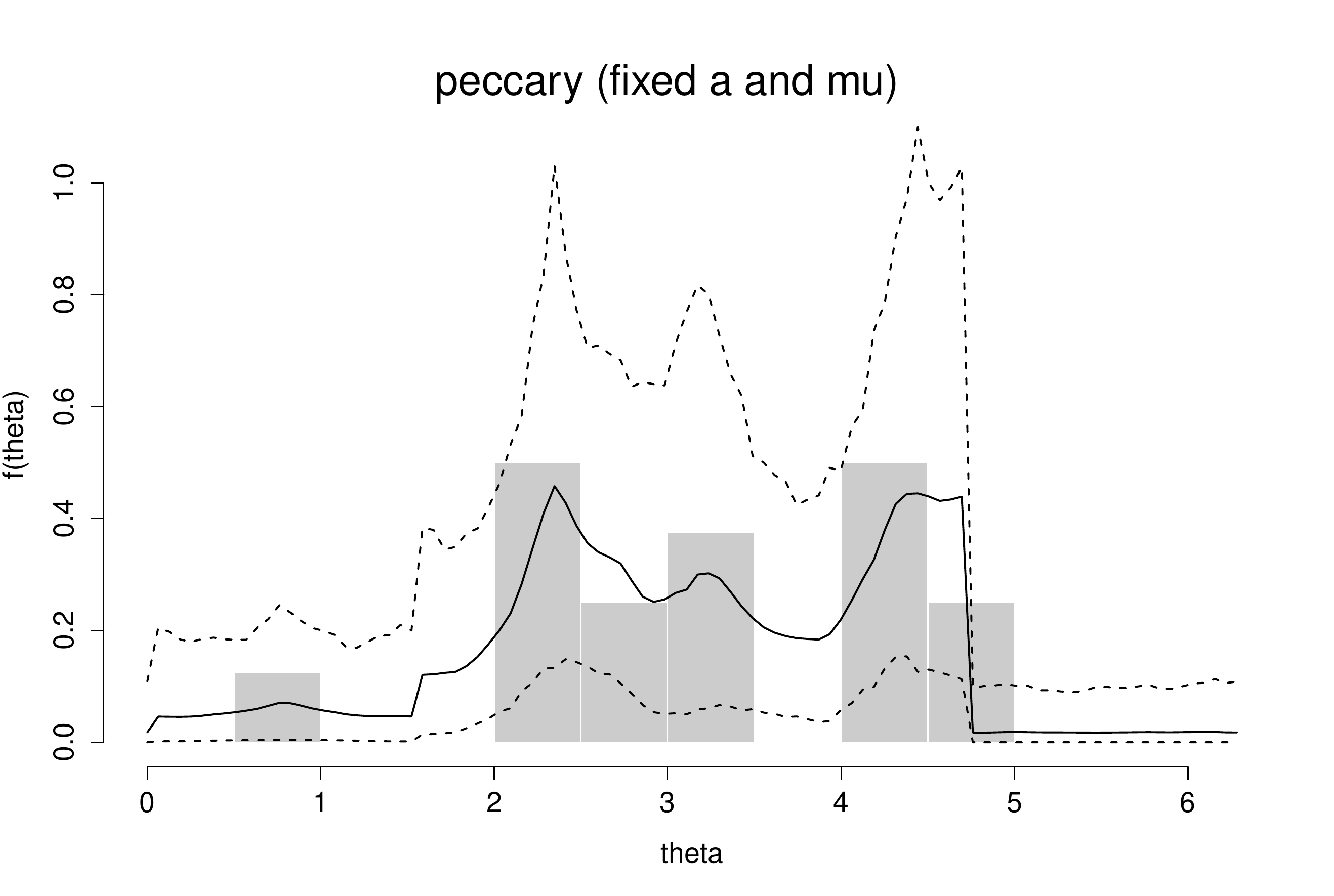}\includegraphics[scale=0.31]{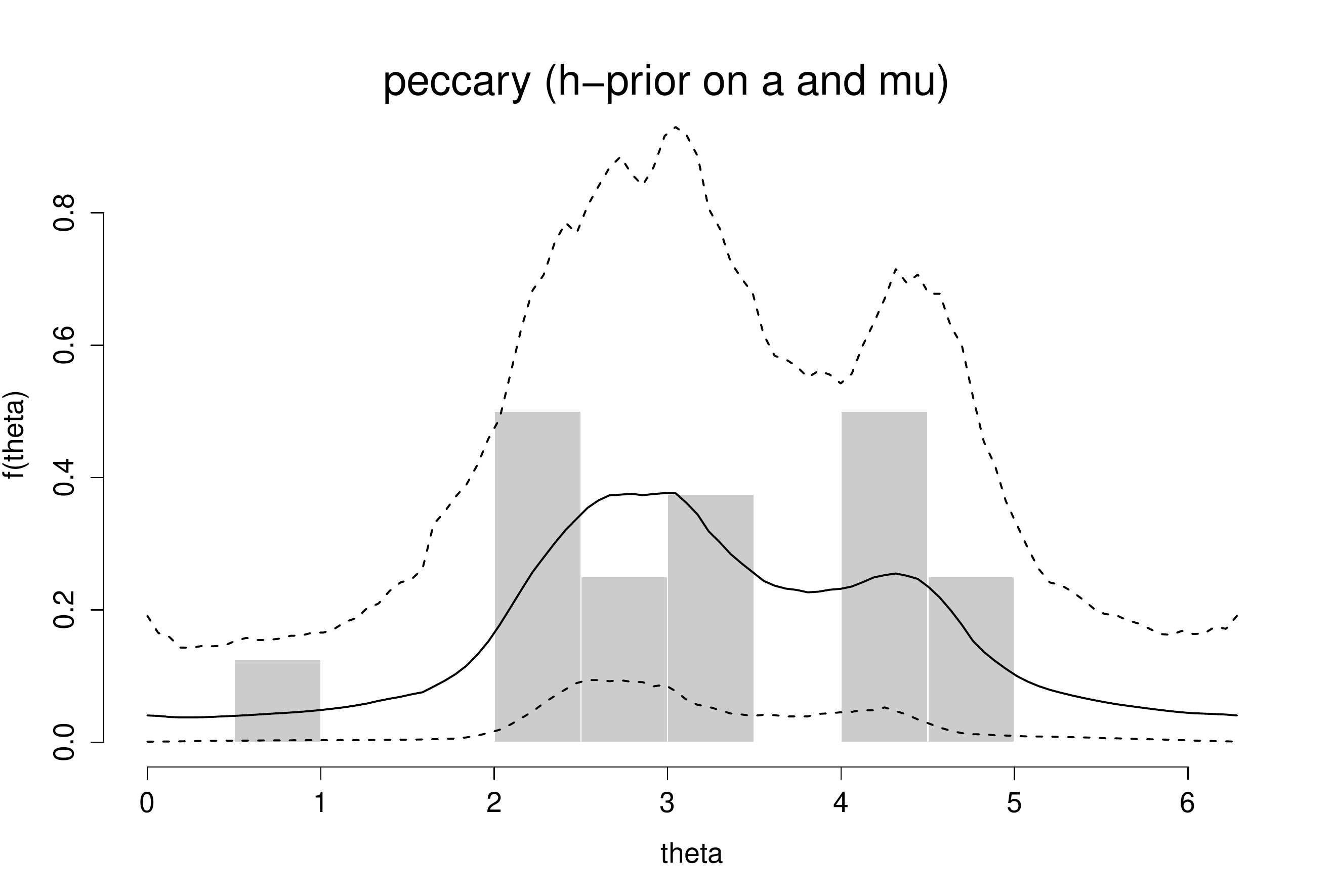}}
\centerline{\includegraphics[scale=0.31]{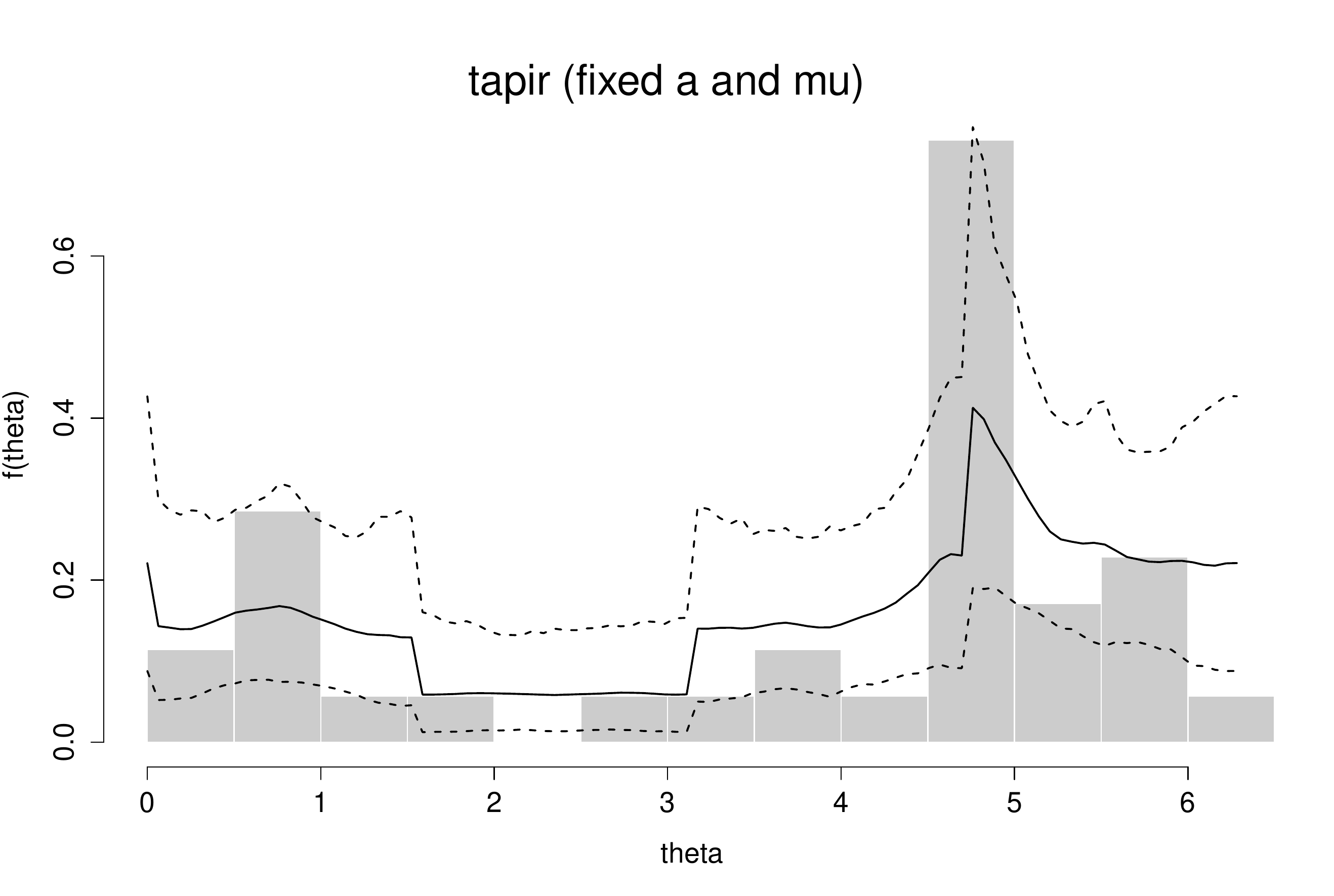}\includegraphics[scale=0.31]{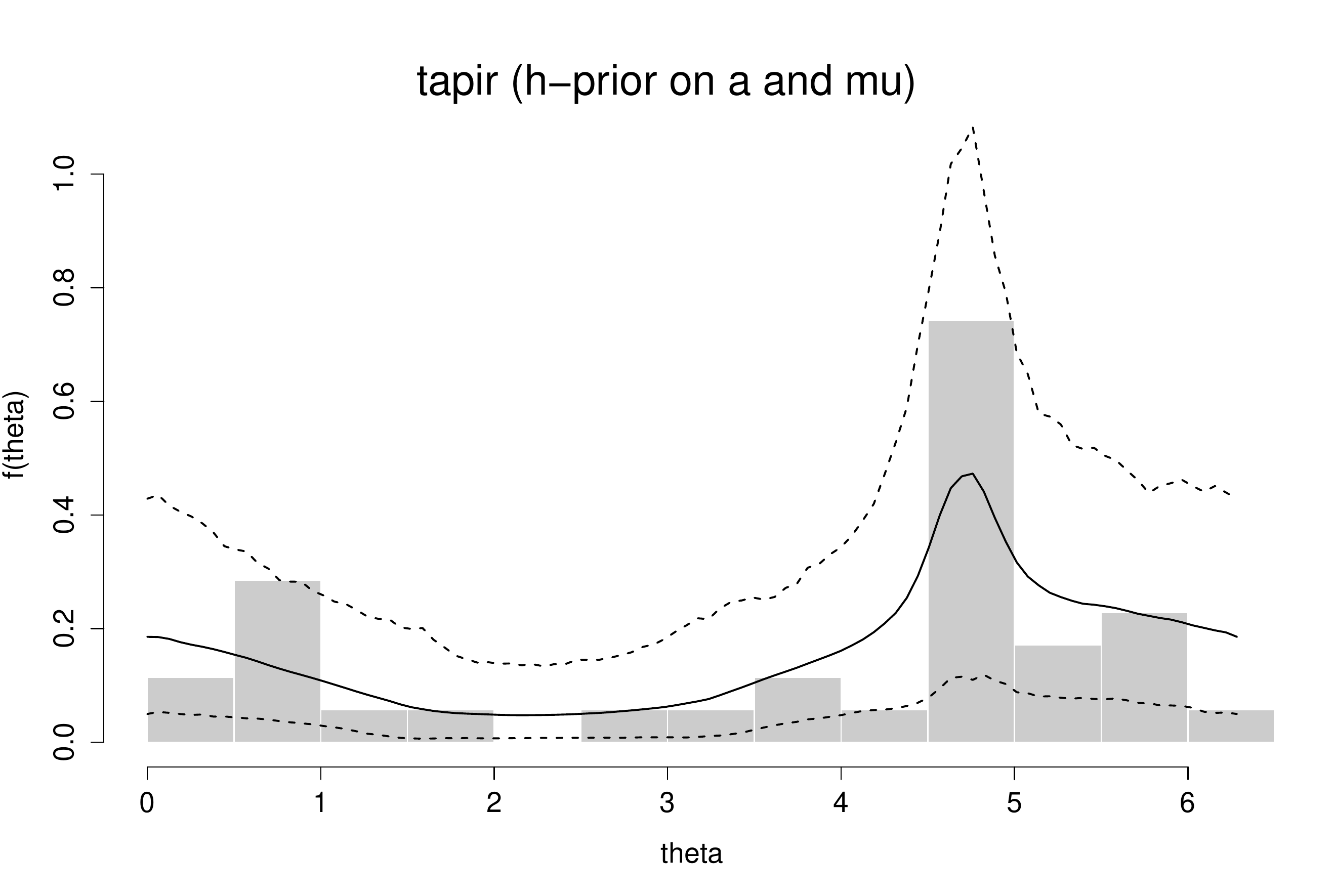}}
\centerline{\includegraphics[scale=0.31]{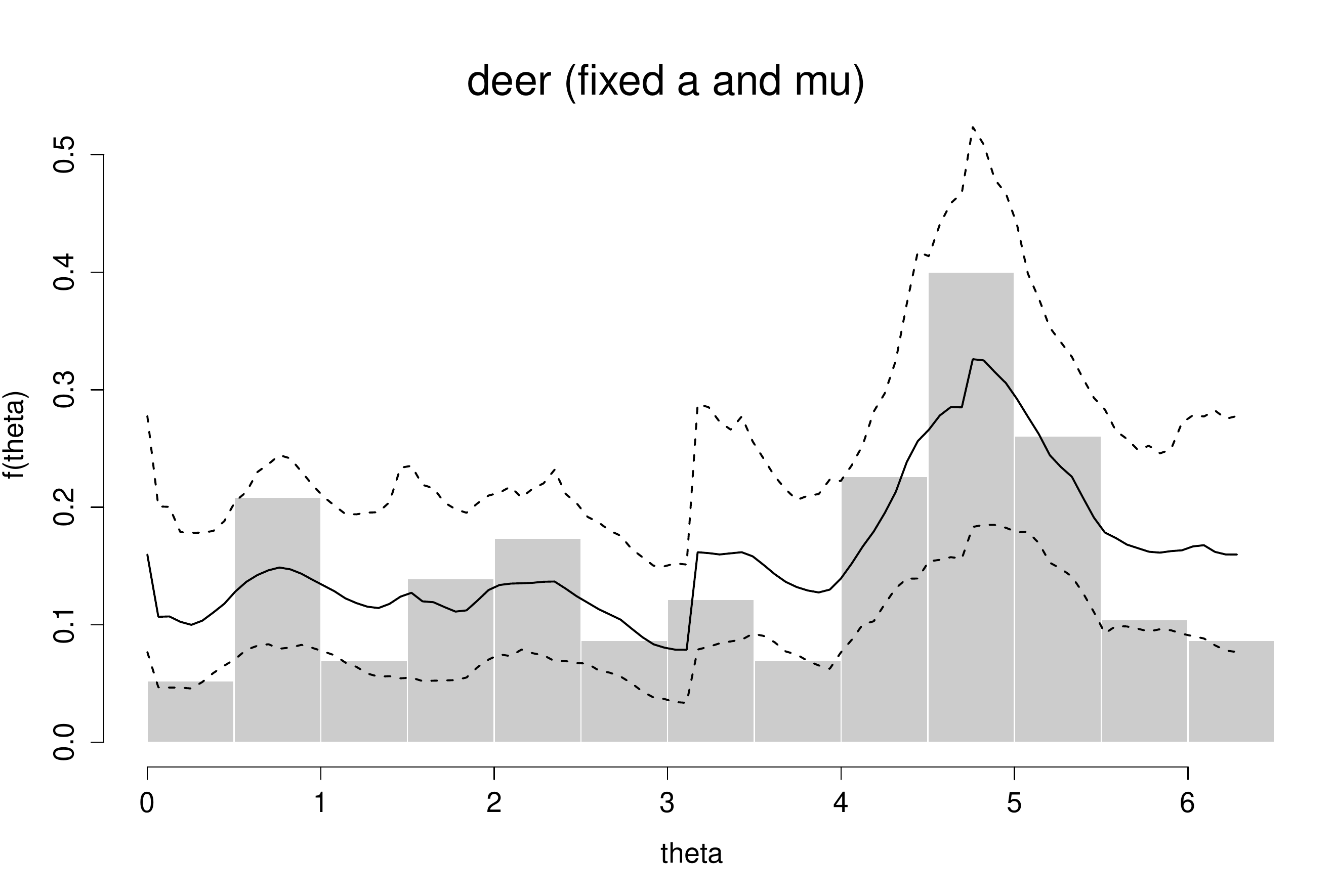}
\includegraphics[scale=0.31]{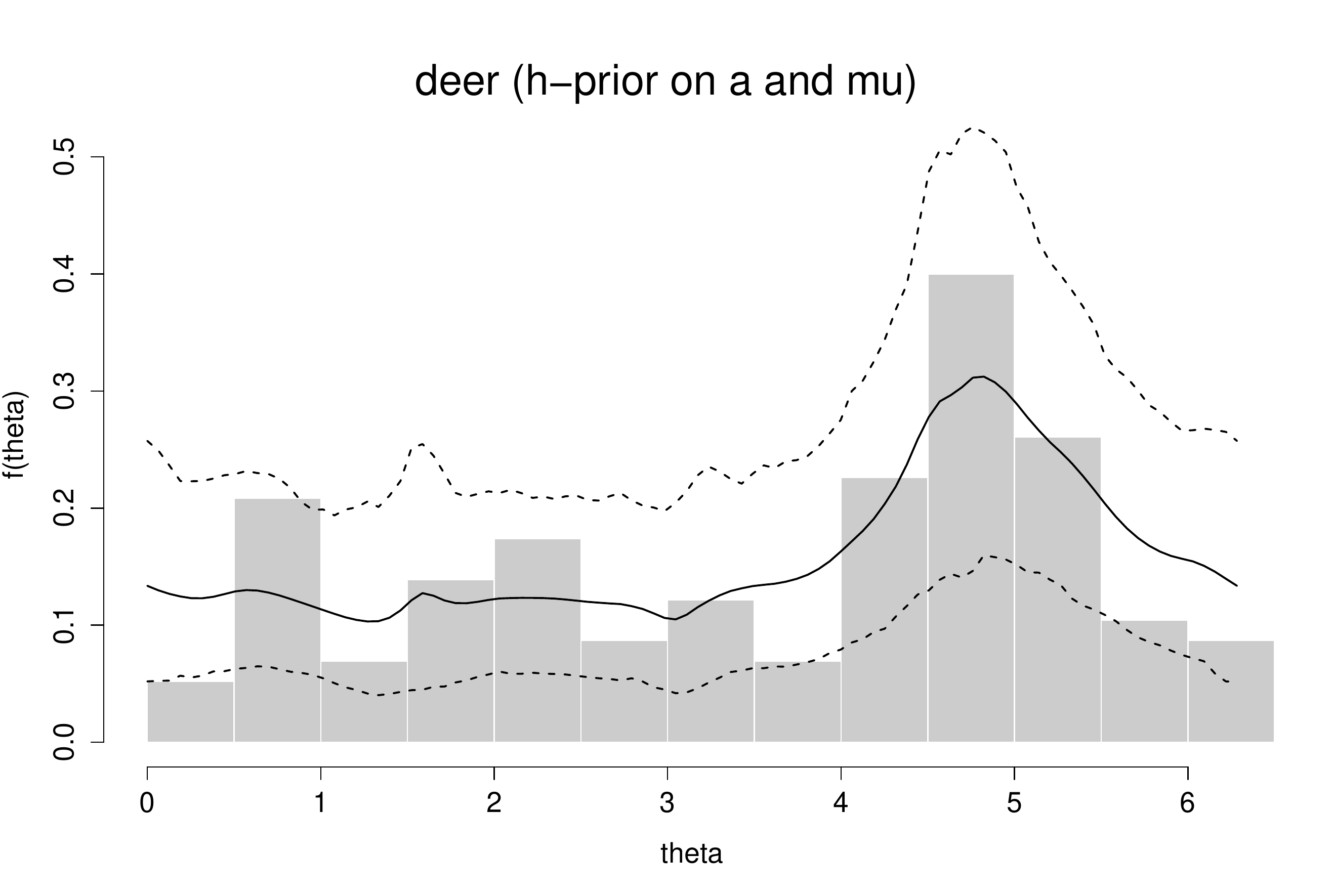}}
\caption{{\small Posterior density estimates for the temporal activity of three animals from El Triunfo Reserve.}}
\label{fig:postreal}
\end{figure}

\begin{figure}
\centerline{\includegraphics[scale=0.65]{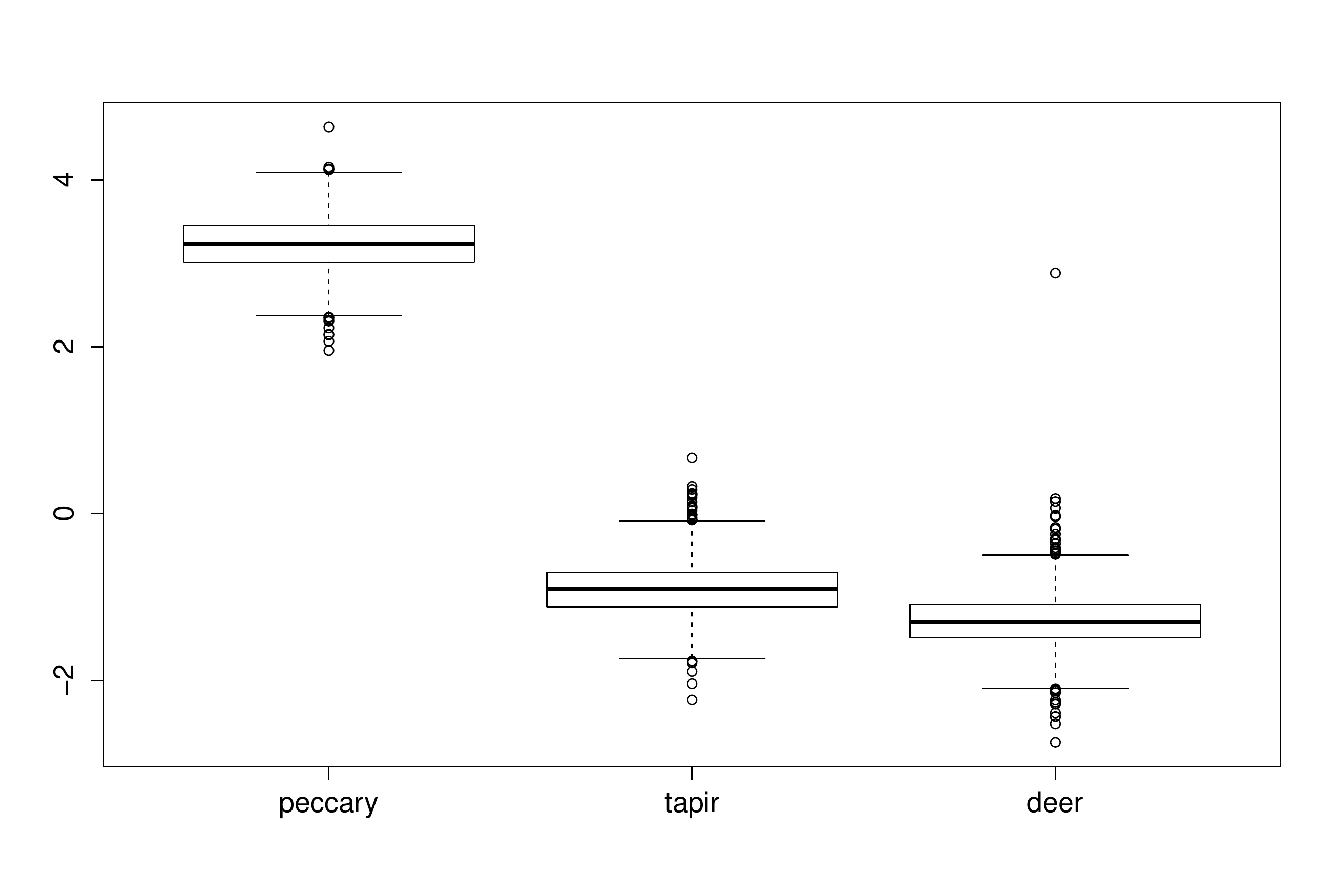}}
\caption{{\small Posterior distribution (boxplot) of the mean time of activity, $\nu_\theta$, for the three animals from El Triunfo Reserve.}}
\label{fig:postmeans}
\end{figure}

\end{document}